\theoremstyle{definition}
\newtheorem{thm}{Theorem}[section]
\newtheorem{prop}[thm]{Proposition}
\newtheorem{lem}[thm]{Lemma}
\newtheorem{ex}[thm]{Example}
\theoremstyle{remark}
\newtheorem{rem}[thm]{Remark}
\newcommand\cref[1]{Corollary~\ref{#1}}
\numberwithin{equation}{section}
\newcommand{\newoperator}[2]{\DeclareMathOperator{#1}{#2}}
\newoperator{\supp}{supp}
\newoperator{\spec}{Spec}
\newoperator{\tr}{Tr}
\newoperator{\re}{Re}
\newoperator{\im}{Im}
\newoperator{\sgn}{sgn}
\newoperator{\per}{per}
\newoperator{\var}{var}
\newoperator{\cov}{cov}
\newoperator{\rank}{rank}
\newoperator{\diag}{diag}
\def\:{\, : \, }
\def\;{\, ; \, }
\def\bra{\langle}
\def\ket{\rangle}
\def\la{{\mathbf{\lambda}}}
\def\La{\Lambda}
\def\L{\Lambda}
\def\B{\mathbf{B}}
\def\phi{\varphi}
\def\b{\beta}
\def\cI{\mathcal{I}}
\def\d{{\mathrm d}}
\def\e{{\mathcal E}}
\def\p{{\mathcal P}}
\def\s{{\sigma}}
\def\y{{\mathcal Y}}
\def\r{{\mathbb R}}
\def\F{{\mathbb F}}
\def\K{{\mathbb K}}
\def\Z{{\mathbb Z}}
\def\c{{\mathcal C}}
\def\n{{\mathbb N}}
\def\t{\theta}
\def\eps{\varepsilon}
\def\P{\mathbb{P}}
\def\E{\mathbb{E}}
\def\R{\mathbb{R}}
\def\g{\gamma}
\def\tr{\mathrm{Tr}}
\def\supp{\mathrm{supp}}
\def\th{\theta}
\def\SINR{\mathsf{SINR}_{o}}
\def\ze{\mathbf{0}}
\def\D{\mathbf{\Delta}}
\def\U{\text{U}}
\def\SNR{\text{SINR}}
\def\ol{\overline}
\def\xx{\mathbf{x}}
\renewcommand*{\@cite@ofmt}{\hbox}
\newcommand{\numberthis}{\addtocounter{equation}{1}\tag{\theequation}}
\begin{document}
\title{{\Large Disordered complex networks : \\  energy optimal lattices and persistent homology}}
\author{
S. Ghosh \thanks{Dept. of Mathematics, National University of Singapore, \texttt{subhrowork{@}gmail.com}} 
\and
N. Miyoshi \thanks{Dept. of Math. \& Comp. Sc., Tokyo Instt. of Tech.,  \texttt{miyoshi{@}is.titech.ac.jp}}
\and
T. Shirai  \thanks{Institute of Math. for Industry, Kyushu University, \texttt{shirai{@}imi.kyushu-u.ac.jp}
} 
}
\date{}

\maketitle

\begin{abstract}
Disordered complex networks are of fundamental interest in statistical physics, and they have attracted recent interest as stochastic models for information transmission over wireless networks. 
While mathematically tractable, a network based on the regulation Poisson point process model offers  challenges vis-a-vis network efficiency.
Strongly correlated alternatives, such as networks based on
 random matrix spectra (the Ginibre network), 
on the other hand offer formidable challenges in terms of 
tractability and robustness issues. In this work, we demonstrate that network models based on random perturbations of Euclidean lattices \textit{interpolate} between Poisson and rigidly structured networks, and allow us to achieve the \textit{best of both worlds} :  significantly improve upon the Poisson model in terms of network efficacy measured by the \textit{Signal to Interference plus Noise Ratio} (abbrv. SINR) and the related concept of \textit{coverage probabilities}, at the same time retaining a considerable measure of mathematical and computational simplicity and robustness to erasure and noise. 

We investigate the optimal choice of the base lattice in
 this model, connecting it to the celebrated problem
 optimality of Euclidean lattices with respect to the
 Epstein Zeta function, which is in turn related to notions
 of lattice energy. This leads us to the choice of the
 triangular lattice in 2D and face centered cubic lattice in
 3D, whose Gaussian perturbations we consider. We provide theoretical analysis and empirical investigations to demonstrate that the coverage probability decreases with increasing strength of perturbation, eventually converging to that of the Poisson network. In the regime of low disorder, our studies suggest an approximate statistical behaviour of the coverage function near a base station as a log-normal distribution with parameters depending on the Epstein Zeta function of the lattice, and  related approximate dependencies for a power-law constant that governs the network coverage probability at large thresholds.

In 2D, we determine the disorder strength at which the
perturbed triangular lattice (abbrv. PTL) and the
 Ginibre networks are the \textit{closest}
 measured by comparing their network topologies via a
 comparison of their \textit{Persistence Diagrams} in the
 total variation as well as the symmetrized nearest
 neighbour distances. We demonstrate that, at this very same
 disorder, the PTL and the Ginibre networks exhibit very similar coverage probability
 distributions, with the PTL performing at least as well as
 the Ginibre. Thus, the PTL network at this
 disorder strength can be taken to be an effective
 substitute for the Ginibre network model, while at the same time offering the advantages of greater tractability both from theoretical and empirical perspectives.
\end{abstract}
\tableofcontents

\section{Introduction and main ideas}
\subsection{Stochastic spatial networks}
\subsubsection{Spatial networks and wireless communications}
The study of complex networks to understand and enhance wireless communication has attracted considerable interest in recent years. An important driving force behind this trend has been the exponentially growing volumes of data that are being generated and gathered, and the necessity of physical infrastructure to make the communication and collection of such data practicable.

An important feature of wireless communication networks is their spatial nature (\cite{dettmann2018spatial,barthelemy2011spatial,andrews2010primer}). Namely, the network consists of a large number of nodes that are distributed in space - typically 2 or 3 dimensional Euclidean space, although more exotic geometries have also been investigated (\cite{kleinberg2007geographic, yan2014accuracy}). This endows such networks with an inherent structure - e.g.,  in 2D they can be studied as (weighted) planar graphs, and therefore inherit all the characteristics possessed by such special classes of mathematical structures. Each of the nodes, or base stations, broadcast signals that interfere with each other; and an important objective is to understand how the field of signal strength at various locations, adjusted with the interferences, looks like across the ambient space. A particular goal would be to design network layouts that optimize such signal strength for most (or typical) locations.

\subsubsection{Disordered complex networks}
In the study of large  complex systems, a classical ansatz, particularly in the physical sciences, is to consider an analogous random system. The typical behaviour of the random system is believed to provide a good insight for how the large complex system will behave, particularly in terms of its relatively simple, readily measurable characteristics. A celebrated example of this approach is the famous random matrix model proposed by Wigner (\cite{wigner1955characteristic}) to understand the behaviour of large and complex nuclei. It turns out that the distribution of energy levels of the nuclei of large and complex atoms are captured extremely well by the spectral properties of random matrices, a phenomenon referred to as the \textit{Wigner surmise} (\cite{wigner1967random,mehta2004random}).   

 In the setting of large, complex networks, this approach can be executed via spatial random point process models. The use of random point processes to study the spatial distribution of wireless network models has been a popular topic in the recent years (\cite{baccelli2010stochastic, blaszczyszyn2018stochastic, ram2007path}). In these models, random points sampled from an appropriate underlying distribution are thought of as representing the locations of wireless nodes. Various objects of interest, like the signal to noise ratio, are studied as random variables, and their asymptotic behaviour in the limit of the system size tending to infinity analysed in order to understand its efficacy as a communication network.


\subsubsection{Poissonian networks and spatial independence}
The most popular and widely studied model in this respect is the Poisson point
process (\cite{baccelli2010stochastic,
blaszczyszyn2018stochastic,haenggi2012stochastic}). A key
feature of the Poisson point process is that the statistical
distributions of the points in disjoint domains are statistically
independent (\cite{kallenberg2006foundations}). This makes the Poisson point process the analogue of \textit{pure
noise} in the world of point processes. Another consequence of this
property is that it makes it rather convenient to do computations with
the Poisson point process - indeed, closed form expressions can be
derived for almost any statistic of interest, and very often asymptotics
can be studied in considerable detail.

\begin{figure}[htbp]
\begin{center}
\includegraphics[scale=0.65]{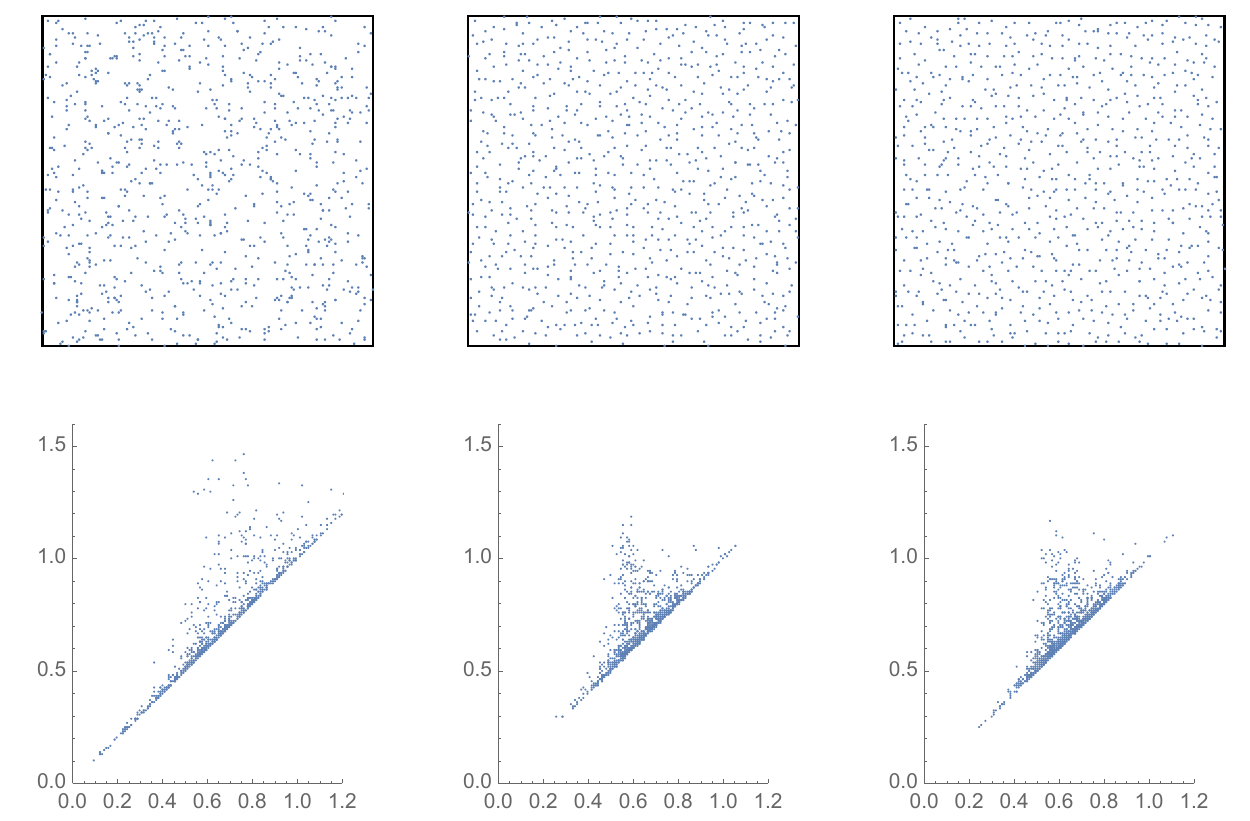}
\end{center}
 \caption{Upper panel: from left to right, simulations of Poisson, Ginibre and Perturbed Triangular Lattice (PTL) ($\sigma=0.4$) with unit intensity. 
Lower panel: the persistence diagrams of the above. }
 \label{fig:PP}
\end{figure}

However, the very same property of statistical independence becomes a
limitation if we think of the real-life modelling perspective. Indeed, the
spatial independence implies that, given that there is a node at a
location, the next node to be introduced is equally likely to occur
very close to or far away from it. This leads to the formation of
clusters of points in the Poisson point process, and to compensate for such
clusters since the average number of points is held fixed at some
constant, we have large vacant spaces devoid of any Poisson points
(see Fig. \ref{fig:PP} top left). Intuitively, this reduces the
efficacy of the Poisson point process as a model for wireless network, making
the coverage somewhat non-uniform and leading to an unnecessary excess
of coverage in certain patches and lack thereof in certain others. This
can, in fact, be established rigorously by comparing certain relevant statistics
of the Poisson point process with those of competing models (\cite{HKPV}).

\subsubsection{Spatial dependence and Ginibre networks}
This motivates the study of models
of random point sets which are bereft of these difficulties. The point process
to be used as a model must satisfy two basic criteria, which are often
somewhat contradictory in spirit. First of all, the point process must
allow sufficiently strong spatial correlation so as to induce mutually
repulsive behaviour at local scales and thereby preclude the clumping
behaviour as the Poisson point process which renders the latter rather
ineffectual as a model for wireless networks. On the other hand, it must
have sufficiently nice properties as a mathematical model so that key
statistical quantities can be estimated and theoretical or numerical analyses can be
carried out effectively. A related criterion, which is important from an
application-oriented perspective, is that the point process should be
easy to simulate, so that large scale statistical behaviour can be
gleaned from simulations, in case theoretical computations do not yield
sufficiently tractable expressions (which is often the case in
non-Poisson situations).

An important model of point processes which has been
studied in this respect is the so-called Ginibre ensemble
(\cite{MS,deng2014ginibre, LBDA14}), which can also be looked upon as the 2D Coulomb gas (or the 2D One Component Plasma) at the inverse temperature $\beta=2$ (\cite{Survey,jancovici1993large}).
Introduced by the physicist Ginibre as the non-Hermitian analogue of the famous Wigner Gaussian random matrix ensemble (\cite{ginibre1965statistical}), it
belongs to the special class of point processes known as
\emph{determinantal point processes} (or DPPs) (\cite{ST03,
HKPV}). In the rest of this article, we will occasionally
refer to it as the Ginibre ensemble
and the resulting
network as the \textit{Ginibre network}.  Via the connection to either random matrices or DPPs, it may be seen that mutual repulsion between the points is a built-in feature of the Ginibre network (\cite{HKPV}). Moreover, for quantities which involve only the absolute
values of the points like the coverage probability or link
success probability, we can exploit the fact
the process of absolute values has an equivalent description as
independent gamma random variables (\cite{HKPV}), which enables some facility with theoretical
analysis.


\subsubsection{Hyperuniformity and spatial networks}
Qualitatively, one can think of \textit{regularity} in a point process as relatively \textit{uniform} distribution of points over space as opposed to a uniform distribution in a probabilistic sense, arising out of the strength of spatial correlations - e.g., the tension between local repulsion pushing points away from each other, and a constraint on their average density being held fixed. In the recent literature, such behaviour have been studied in its own merit, under the broad umbrella of \textit{hyperuniformity}, and related \textit{rigidity phenomena}.  

\textit{Hyperuniformity}, also referred to as
\textit{superhomogeneity}, is the phenomenon of suppressed
fluctuations of particle numbers - in Poissonian systems,
the variance of the number of points in a large domain of
space grows like the volume (referred to in the physics
literature as \textit{extensive fluctuations}), whereas for
hyperuniform systems, the fluctuations are smaller order of
the volume (resp., \textit{sub-extensive}), often growing
only like the surface area of the domain
(\cite{torquato2003local, Survey}). This is exhibited by
many important systems in nature, a key example being that
of random matrix ensembles including the Ginibre ensemble discussed
above,  Coulomb systems in general, as well as more exotic
models like zeros of random functions (\cite{HKPV, Survey})
and \textit{stealthy hyperuniform systems}
(\cite{torquato2015ensemble,torquato2018hyperuniform,ghosh2018generalized}).
An alternate description of hyperuniformity can be obtained
via its ``structure function'' or ``power spectrum''
$S(k)$. Roughly speaking, it is the Fourier transform of the
(truncated) two point correlation function of the point
process. Hyperuniformity entails that $S(k) \to 0$ as $k \to
0$, and indeed an array of hyperuniform  behaviour can be obtained depending on the rate at which $S(k) \to 0$ as $k \to 0$ (\cite{torquato2003local,Survey,baake2015diffraction}).


\subsection{A three-fold investigative framework}
\label{sec:3-fold}

The above considerations open up the avenue for  investigation from three
perspectives. First of all, while the Ginibre network is more spatially regular
than the Poisson network, it is not the only such model, and
therefore it is a natural question to investigate whether
other models can provide additional value from the network
design perspective, while retaining the benefit of a
relatively regular point pattern. In particular, considering
hyperuniformity as a cornerstone of a spatially regular
point pattern, we can investigate  models of random point
sets which exhibit similar hyperuniform behaviour as the
Ginibre point process, but offer other advantages from the modelling and analysis points of view.

Secondly, in the Ginibre network,
the lack of
independence of any sort other than the distances of the points from the origin turns out to be a 
hindrance for theoretical analysis as well as computational investigations.

Finally, the Ginibre ensemble is more or less a stand-alone model, and it is difficult to introduce a
rich class of parameters so as to leave open the possibility of tuning
the model to data. From a statistical modelling
point of view, a parametric family would thus be of great interest. Another specificity of the Ginibre ensemble is its planarity, with no natural extensions to higher dimensions, wherein such network models would be significant particularly in 3D space.

In summary, it would be of great interest to investigate
point process models of wireless networks that embody the spatial regularity features of the
random matrix ensemble, while at the same time retain the
benefits of the Poissonian networks with regard to an independent latent structure for computation, simulation and analytical purposes, and to be able to do so in a parametric manner in general dimensions.

\subsection{Disordered lattices}
\label{sec:dislat}
In this article, we study a class of models which are promising
candidates for attaining many of objectives outlined
above. To this end, we consider a lattice $\L \subset \R^d$
for any arbitrary dimension $d$, and a mean-zero random
variable $\g$ with distribution $F$ on $\R^d$. We can then
define the point process
\[
 \p_{\L,\g}
= \{n + \gamma_n : n \in \L, \text{$\gamma_n$ i.i.d. copies of $\gamma
\sim F$}\}.
\]
Here a subset $\La \subset \R^d$ is said to be a lattice in $\R^d$ if $\La$
is an additive group of $\R^d$ which is expressed as 
\[
 \La = \{\sum_{i=1}^d a_i \mathbf{e}_i : a_i \in \mathbb{Z} \ (i=1,2,\dots,d)\}
\]
by using a basis $\mathbf{e}_1, \dots, \mathbf{e}_d$ of the vector
space $\R^d$. 
A particularly interesting case of this arises when $\g$ is Gaussian on $\R^d$ with mean 0 and variance $\s^2$, entailing that $\s$ automatically becomes a tuning parameter in the model. Another important consideration is the choice of the lattice $\L$, which, as we shall see in Section \ref{sec:energy-optimality}, will have a significant impact on the Signal to Interference plus Noise Ratio and will lead to interesting mathematical connections with energy optimality. Finally, as demonstrated in Section \ref{sec:conv-Poisson}, disordered Gaussian lattices interpolate continuously between the original lattice and the Poisson point process with the same density, as the noise parameter $\s$ varies. This provides us the opportunity to tune the parameter $\s$ so as to achieve a desired balance between structure and randomness in the network.

The lattice group $\La$ acts on $\R^d$ by translations
and its quotient $T_{\La} :=\R^d / \La$ turns out to be
a torus, which is represented by its fundamental parallelotope of $\La$ 
\[
 \mathcal{D}_{\La} = \{\sum_{i=1}^d a_i \mathbf{e}_i : a_i \in [0,1) \ (\forall i
 =1,2,\dots,d)\}. 
\]
If one desires to have translation invariance of the perturbed
lattice, one can add a uniform shift by adding a uniform
random vector from $\mathcal{D}_{\La}$. 
 It turns out that such $ \p_{\L,\g}$ is hyperuniform as soon as the ``tail'' of $\g$ (i.e., $\P[|\g|>t]$) decays faster than $t^{-d}${, where $|\gamma|$ is the Euclidean norm in $\R^d$}. For a theoretical discussion of this property, we refer the reader to \cite{Survey, ghosh2017number}. 
Since the effect of the uniform shift on SINR appears just as the
integration over $\mathcal{D}_{\La}$, for simplicity of
our discussions, we will omit it in the present paper. 
%

The above models, in spite of exhibiting \textit{rigid structure} or
\textit{regularity}, as manifested in their hyperuniform behaviour,
 have an independence structure clearly built into
them. The fact that the particles tracing their origin to different
lattice points in $\L$ are independent enables us to write down closed
form expressions for various statistics of interest, and also
facilitates an ease of simulation that is not available with the Ginibre
random network.

Although we have introduced the models
with i.i.d. perturbations, an interesting variant is one
where the perturbations are independent but not identically distributed,
by varying the scale parameter (i.e. the standard deviation of $\g$), or for that matter,
considering different random variables for perturbing different lattice
locations. This can be utilized to factor in spatial inhomogeneity of
a wireless network  node distribution.

One can, for example, consider a statistical problem of fitting
a lattice perturbation model as above to a given wireless network, by
estimating parameters like the standard
deviations of Gaussian fluctuations in different lattice sites, or if one is more ambitious, estimating the statistical properties of an unknown distribution of perturbation that might belong to a wider family like the exponential family. It
is very difficult to formulate a reasonable analogue of such questions
in the context of other strongly correlated models. 

Finally, an important advantage of perturbed lattice processes in modelling wireless base stations is that these processes are robust to missing data and erasure. To be more precise, if a collection of perturbed lattice 
points are missing or corrupted, they can be re-generated by simply re-sampling the perturbations corresponding to those particular lattice points, with the rest of the configuration being left unchanged, and this procedure fully preserves the statistical properties of the random network. This is an important practical advantage that sets disordered lattice processes apart from other  correlated point processes like random matrix processes, where erasure or corruption of a subset of points do not admit any simple correction.


\subsection{The network observables}
\label{sec:netobs}

\subsubsection{The basic set-up}
We are now ready to define the key observables of the
induced network that we are going to study, which we will do
in the completely general set up of an arbitrary point
process. 

Our setting is the following.
A configuration $\Phi = \sum_{i=1}^{\infty} \delta_{X_i}$
is a simple (stationary) point process on $\r^d$
and it provides a realization of spatial configuration of
base stations of a cellular network.
A decreasing function
$\ell : (0,\infty) \to [0,\infty)$ is a \textit{path-loss function},
which represents the attenuation of signals at distance $r$.
A random variable $F_i$, independent of the point process $\Phi$,
represents a random effect of fading/shadowing
from the base station $X_i$ to the typical user.
Here we assume the so-called \textit{Rayleigh fading}, i.e.,
$\{F_i\}_{i=1}^{\infty}$ are i.i.d. exponential random variables with
mean $1$. Let $W$ be a random variable representing \textit{thermal noise} (modelling general random disturbances from the \textit{environment}),
independent of $\{F_i\}_{i=1}^{\infty}$ and $\Phi$.

\subsubsection{The SINR and its distribution}
Suppose that a typical user is located at the origin (since
our point processes are translation invariant, statistically there is
no loss of generality in reducing to the origin as the point of reception), and
 is receiving the signal associated with the nearest base station $X_B$ from the
origin, where $B$ is the lattice site corresponding to the nearest base
station. This signal is being retarded by interference from other base stations, and by the pure thermal noise $W$.
 The Signal-to-Interference-plus-Noise-Ratio (henceforth abbreviated as SINR) at the origin is defined by
\begin{equation} \label{eq:SINR_0}
 \SINR = \frac{F_B \ \ell(|X_B|)}{W + I(B)}
\left(= \frac{\text{signal}}{\text{noise}}\right),
\end{equation}
where $I(B) = \sum_{i \not= B} F_i \ \ell(|X_i|)$ is the cumulative
interference signal from all the base stations other than $B$.  $\SINR$ is the observable by which we are going to adjudicate
the efficiency of the network, and hence this quantity will of paramount interest in our considerations. For detailed considerations on this model, including its motivational origins and effectiveness, we refer the interested reader to  (\cite{baccelli2010stochastic,MS}).

Notice that, because of the randomness in the locations of the base stations, $\SINR$ is a random variable. In order to compare $\SINR$ for two random networks, we compare their tails, that is, the probability that the $\SINR$ exceeds a certain level $\theta$. This probability is known as the \textit{coverage probability} (corresponding to the level $\theta$), and greater the coverage probability for a given $\theta$, better is the network.

In the set-up of signal, interference and noise discussed above, the coverage probability is given by

\begin{prop}[Proposition~2.2 in \cite{MS2}]
\label{prop:SINR}
Suppose that base stations are distributed according to
a simple point process $\Phi = \sum_{i=1}^{\infty} \delta_{X_i}$.
Then, the coverage probability is given by the formula
\begin{equation} \label{eq:prop_SINR}
P(\SINR > \theta)
= E\left[\prod_{j \not= B} \left(1 + \theta
 \frac{\ell(|X_j|)}{\ell(|X_B|)} \right)^{-1}
\right],
\end{equation}
where $X_B$ is of the least modulus among the base station locations.
\end{prop}

\subsubsection{Rayleigh fading and related effects}
In particular, in the important case when $\ell(r) = a r^{-2\beta} \ (\beta > 1)$ and the spatial dimension $2$, the coverage probability $p_c(\theta, \beta)$ is given by
\begin{equation} 
p_c(\theta, \beta)
:= P(\SINR > \theta)
= E\left[
\prod_{j \not= B}
\left(1 + \theta
 \left|\frac{X_B}{X_j}\right|^{2\beta} \right)^{-1}
\right].
\label{eq:infiniteprod}
\end{equation}
In general dimension $d \ge 2$, a natural choice for the path-loss is $\ell(r) = a r^{-d\beta} \ (\beta > 1)$, and it turns out that 
\begin{equation} \label{eq:infiniteprod_d}
p_c(\theta, \beta)
=E\left[\prod_{j \not= B} \left(1 +\theta
 \frac{|X_B|^{d\beta}}{|X_j|^{d\beta}}  \right)^{-1}
\right].
\end{equation} 
In this work, as also with natural applications, we mostly concern ourselves with dimensions 2 and 3.

\begin{rem} \label{rem:SINR_a}
More generally, we can consider the SINR at a general point $a$, possibly different from the origin. This will be given by 
\begin{equation} \label{eq:SINR_a}
 \text{SINR}_a = \frac{F_B \ \ell(|X_B-a|)}{W + I(B;a)}
\left(= \frac{\text{signal at $a$}}{\text{noise at $a$}}\right),
\end{equation}
where $X_B$ is the nearest base station to $a$, and $I(B;a) = \sum_{i \not= B} F_i \ \ell(|X_i-a|)$ is the cumulative
interference signal from all the base stations other than $B$, observed at $a$. Expressions analogous to \eqref{eq:prop_SINR} and \eqref{eq:infiniteprod} can also be obtained. In particular, we have the following expression for the coverage probability at the location $a$:
\begin{equation} \label{eq:infiniteprod_d_a}
p_c^{[a]}(\theta, \beta)
= E\left[\prod_{j \not= B} \left(1 +\theta
 \frac{|X_B-a|^{d\beta}}{|X_j-a|^{d\beta}}  \right)^{-1}
\right],
\end{equation} 
where $B$ is the nearest base station to $a$.
In particular, this focusses attention on the \textit{coverage function} at the location $a$. For a point configuration $\Phi$ obtained by a random perturbation of the lattice $\La$, the coverage function at $a$ is the random variable
\begin{equation} \label{eq:cov_func_a}
\mathcal{C}(\Phi;a)=\prod_{B \not= j \in \La} \left(1 +\theta
 \frac{|X_B-a|^{d\beta}}{|X_j-a|^{d\beta}}  \right)^{-1}.
 \end{equation}
\end{rem}

\subsubsection{SINR for disordered lattices}
It can already be understood from Eq.~\eqref{eq:infiniteprod} why negatively dependent (i.e., mutually repelling) point configurations would be effective in improving the coverage probability. To this end, we focus on the terms
$\left(1 + \theta  \left|\frac{X_B}{X_j}\right|^{2\beta} \right)^{-1}$. In order that the value of such a term be high, two things would be conducive. First, $|X_B|$ should be preferably low. Secondly, the most important terms that can damp the coverage probability are those for which $|X_j|$ is small subject to the constraint that $|X_j|>|X_B|$, and the contribution of these terms should be not too small, which essentially requires that there are not too many $X_j$ that are  farther than $X_B$ from the origin but not too far. This necessitates that points do not cluster close to the sphere in $\R^d$ on which the nearest base station is located.

For repulsive point processes, typically  the nearest point
to the origin is closer to the origin with a higher
probability. This is corroborated by the fact that the
\textit{hole probability} for radius $R$ (i.e., the
probability of having no points inside a ball of radius $R$)
typically decays faster than the Poisson point process (\cite{HKPV}). For the nearest base station to be far away from the origin, there has to be a big \textit{hole} centred at the origin, which is statistically unfavourable  in repulsive point processes. Furthermore, negatively dependent processes also statistically discourage clustering of points in a region  of space. These two properties of repulsive point processes help in improving the coverage probability, and make them ideal candidates for base station distribution in wireless networks.

As discussed in Section \ref{sec:dislat}, our main focus in this work is on random networks where the base stations are distributed as a disordered lattice. Let $\Lambda$ be a lattice in $\R^d$. 
We consider the following probability density function on $t \ge 0$, indexed by $n \in \Lambda$:
\begin{equation} \label{eq:densitydef}
f(t,n,\sigma)=e^{-\frac{|n|^2}{2\sigma^2}}  \cdot t^{\frac{d}{2}-1}e^{-\frac{1}{2}t} \cI_d(\sigma^{-1}|n|\sqrt{t}),\quad (t\ge 0),
\end{equation}
where $|n|$ denotes the Euclidean norm of $n$ in $\R^d$ and 
\[
\cI_d(u)=\frac{1}{2\cdot (2\pi)^{d/2}} \int_{\mathbb{S}^{d-1}}
e^{u \langle \omega , e_1 \rangle} \d \omega,
\] 
with $e_1$ being the first standard co-ordinate vector in $\R^d$ and $\d \omega$ being the standard spherical measure on $\mathbb{S}^{d-1}$.

\begin{rem}
For $d=2$, the function $\cI_d$ turns out to be closely related to the modified Bessel function of the first kind, given by (cf. \cite{Lebedev})
\[
 I_0(z) = \frac{1}{2\pi} \int_0^{2\pi} e^{z \cos \phi} \d\phi, \quad (\re z>0).
\]
For general $d$, it is easy to see that  
\begin{equation}
\cI_d(u) = \frac{1}{2^{d/2} \pi^{1/2} \Gamma(\frac{d-1}{2})} 
\int_0^{\pi} e^{u \cos \phi} (\sin \phi)^{d-2} \d\phi. 
\label{eq:generalI} 
\end{equation}
\end{rem}

Then we can state
\begin{thm}
 \label{thm:SINR}
\begin{itemize}
\item[(i)] The coverage probability is given by
\begin{align}
 p_c(\theta, \beta, \sigma)
&= \sum_{n \in \Lambda}
\int_0^{\infty}
\bigg\{\prod_{j \not= n \in \Lambda}
\int_{t }^{\infty}
\Big(1 + \th \frac{t^{d\beta/2}}{u^{d\beta/2}} \Big)^{-1}
 f(u, j, \sigma) \d u \bigg\} f(t, n, \sigma) \d t.
\end{align}
\item[\rm (ii)] The limit $C_1(\beta, \sigma)
=  \lim_{\th \to \infty} \th^{\frac{1}{\beta}} p_c(\th, \beta, \sigma)$
exists and \[
	C_1(\beta, \sigma) =\sum_{n \in \Lambda} e^{-\frac{|n|^2}{\sigma^2}}\int_0^\infty \bigg\{ \prod_{j \ne n \in \Lambda} \int_0^\infty \Big(1 + \frac{s^{d\beta/2}}{u^{d\beta/2}} \Big)^{-1} f(u,j,\sigma) \d u \bigg\} \frac{s^{\frac{d}{2}-1}}{2^{d/2}\Gamma(d/2)} \d s.
\]
\end{itemize}
\end{thm}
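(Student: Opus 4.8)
The plan is to deduce this from Proposition~\ref{prop:SINR} by conditioning on which lattice site carries the nearest base station. Write $\Phi=\p_{\Lambda,\g}=\{X_n=n+\g_n:n\in\Lambda\}$ with i.i.d.\ $\g_n$ and set $T_n:=|X_n|^2/\sigma^2$. Two facts drive the computation: the $T_n$ are \emph{independent} (as the $\g_n$ are), and $T_n$ has density $f(\cdot,n,\sigma)$ --- this is exactly \eqref{eq:densitydef}, recognised, via the spherical-integral (Bessel) representation recalled in the Remark, as the density of $|X_n|^2/\sigma^2$ (a non-central $\chi^2_d$ law). Since a.s.\ $|X_n|\to\infty$ and no two $T_n$ coincide, there is a.s.\ a unique nearest base station, so the events $\{B=n\}$, $n\in\Lambda$, partition the sample space, and on $\{B=n\}$ one has $T_n<T_j$ for all $j\ne n$ while $|X_B/X_j|^{d\beta}=(T_n/T_j)^{d\beta/2}$. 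Hence, from the $d$-dimensional form of \eqref{eq:infiniteprod},
\[
p_c(\theta,\beta,\sigma)=\sum_{n\in\Lambda}\E\Big[\mathbf 1\{B=n\}\prod_{j\ne n}\big(1+\theta\,(T_n/T_j)^{d\beta/2}\big)^{-1}\Big];
\]
conditioning on $T_n=t$ and using that the $T_j$, $j\ne n$, are mutually independent and independent of $T_n$, the conditional expectation factorises as $\prod_{j\ne n}\int_t^\infty(1+\theta\,t^{d\beta/2}/u^{d\beta/2})^{-1}f(u,j,\sigma)\,\d u$, and integrating against $f(t,n,\sigma)\,\d t$ gives (i). The interchanges are benign: the summand is non-negative (Tonelli for the $n$-sum), the partial products lie in $(0,1]$, and the infinite product converges a.s.\ because the interference series $\sum_j|X_j|^{-d\beta}$ is a.s.\ finite, which holds since $\sum_{n\in\Lambda}|n|^{-d\beta}<\infty$ for $\beta>1$.

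\textbf{Part (ii).} In the formula of (i) I would substitute $t=\theta^{-2/(d\beta)}s$. Its virtue is that $\theta\,t^{d\beta/2}/u^{d\beta/2}=s^{d\beta/2}/u^{d\beta/2}$, so $\theta$ disappears from the inner integrand and the inner integral becomes $\int_{\theta^{-2/(d\beta)}s}^{\infty}(1+s^{d\beta/2}/u^{d\beta/2})^{-1}f(u,j,\sigma)\,\d u\to\int_0^\infty(1+s^{d\beta/2}/u^{d\beta/2})^{-1}f(u,j,\sigma)\,\d u$ as $\theta\to\infty$. At the same time the Jacobian $\theta^{-2/(d\beta)}$, the factor $t^{d/2-1}=\theta^{-(d-2)/(d\beta)}s^{d/2-1}$ carried by $f(t,n,\sigma)$, and the prefactor $\theta^{1/\beta}$ multiply to $\theta^{\,1/\beta-2/(d\beta)-(d-2)/(d\beta)}=\theta^{0}=1$: the powers of $\theta$ cancel exactly --- which is precisely why $\theta^{1/\beta}$ is the correct normalisation. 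The remaining $\theta$-dependent factors in $f(\theta^{-2/(d\beta)}s,n,\sigma)$, namely $e^{-\theta^{-2/(d\beta)}s/2}\to1$ and $I(\sigma^{-1}|n|\sqrt{\theta^{-2/(d\beta)}s})\to I(0)$, together with $e^{-|n|^2/(2\sigma^2)}$ and $s^{d/2-1}\,\d s$, assemble into the $\theta$-free integrand appearing in the asserted expression for $C_1(\beta,\sigma)$. It then remains to pass to the limit inside the $s$-integral, the product over $j$, and the sum over $n$.

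\textbf{Main obstacle.} The real work in (ii) is justifying these three limit interchanges, i.e.\ producing a $\theta$-uniform integrable/summable majorant. I expect to use: (a) monotonicity --- lowering the lower limit only enlarges the non-negative inner integral, so it is bounded above by $\psi_j(s):=\int_0^\infty(1+s^{d\beta/2}/u^{d\beta/2})^{-1}f(u,j,\sigma)\,\d u\le1$, and likewise $e^{-\theta^{-2/(d\beta)}s/2}\le1$ and $I(\sigma^{-1}|n|\sqrt{\theta^{-2/(d\beta)}s})\le I(\sigma^{-1}|n|\sqrt s)$ for $\theta\ge1$; (b) the estimate $1-\psi_j(s)\le\P(T_j\le s)+s^{d\beta/2}\,\E\big[T_j^{-d\beta/2}\mathbf 1\{T_j>s\}\big]$, whose right-hand side is summable over $j\in\Lambda$ --- the tail part because $\E\big[T_j^{-d\beta/2}\mathbf 1\{T_j\ge|j|^2/(2\sigma^2)\}\big]\le(2\sigma^2/|j|^2)^{d\beta/2}$ and $\sum_{n\in\Lambda}|n|^{-d\beta}<\infty$ \emph{precisely when $\beta>1$}, the remainder because $\P(T_j\le s)$ and $\P\big(s<T_j<|j|^2/(2\sigma^2)\big)$ decay like $e^{-c|j|^2}$ by Gaussian concentration. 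Thus $\sum_{j\ne n}(1-\psi_j(s))<\infty$, so $\prod_{j\ne n}\psi_j(s)$ is a strictly positive convergent product; standard tail bounds for the non-central $\chi^2$ then show it decays in $s$ fast enough to absorb the (at most polynomial-in-$s$ times $e^{\sigma^{-1}|n|\sqrt s}$) growth of the outer factor, and the $n$-sum of the resulting majorant converges thanks to the Gaussian weight $e^{-|n|^2/(2\sigma^2)}$. With these majorants, dominated convergence --- and its infinite-product form, applied to $\sum_j(-\log\psi_j(s))$ --- yields both assertions; the hypothesis $\beta>1$ is used at exactly one point, the convergence of $\sum_{n\in\Lambda}|n|^{-d\beta}$.
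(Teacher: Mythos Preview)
Your proposal is correct and follows essentially the same route as the paper: for (i) you introduce $T_n=|X_n|^2/\sigma^2$ (the paper's $Y_n(\sigma)$), identify its density as $f(\cdot,n,\sigma)$, decompose on $\{B=n\}$, and factorise by independence; for (ii) you make the same change of variables $t=\theta^{-2/(d\beta)}s$ and read off the limit from the pointwise behaviour of $f(\theta^{-2/(d\beta)}s,n,\sigma)\,\theta^{-2/(d\beta)}$ and of the inner integral. The only notable difference is that you spell out a dominated-convergence strategy for the three limit interchanges in (ii), whereas the paper simply asserts the limit; your majorant sketch (monotonicity of $I$, the bound $1-\psi_j(s)\le \P(T_j\le s)+s^{d\beta/2}\E[T_j^{-d\beta/2}\mathbf 1\{T_j>s\}]$, Gaussian concentration for the small-$T_j$ terms, and the lattice sum $\sum|n|^{-d\beta}<\infty$ for $\beta>1$) is the right shape and more than the paper itself supplies.
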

A key implication of Theorem \ref{thm:SINR} is that, for fixed $\beta$ and $\sigma$,  the curve $p_c(\theta,\beta,\sigma)$ v.s. $\theta$ is asymptotically a power law, and therefore, for large values of $\theta$, improving the coverage probability amounts to designing networks that provide a bigger value for $C_1(\beta,\sigma)$, which is purely a lattice-dependent quantity for a given level of disorder $\sigma$.

We defer the proof of Theorem \ref{thm:SINR}, along with the statement and proof of an auxiliary lemma, to Section \ref{sec:thmproof}.

\subsection{Main results and contributions}
Herein, we discuss the main results and contributions obtained in this paper. 

\textbf{Disordered lattices as spatial random network
models.} A central theme in this work is to demonstrate that
disordered lattice models (i.e. network models based on
random perturbations of Euclidean lattices) as highly
effective models for random spatial networks, covering in
particular applications to wireless network models. We
demonstrate that disordered lattices interpolate between
Poisson and Ginibre networks, and allow us
to achieve the best of both worlds : significantly improve
upon the Poisson model in terms of network efficacy measured
by the  SINR, at the same time retaining a considerable
measure of mathematical and computational simplicity and
robustness to erasure and noise. Our approach is
substantiated via theoretical analysis as well as empirical
investigations. Detailed comparisons of performance to
Poissonian and Ginibre networks are carried out in Section \ref{sec:DisLat_comp}. Furthermore, we prescribe the optimal lattice and the optimal level of disorder for perturbed lattice models in the context of wireless network applications. These are explored in detail in Sections \ref{sec:energy-optimality} and \ref{sec:DisLat_comp} respectively. The  closest approximation to a Ginibre point process by a disordered lattice model
 is obtained at the disorder level $\s \sim 0.4$, which interestingly  is also the disorder level at which the coverage probability distributions of the two network models roughly match. In a nutshell, we put forward disordered lattice models as a viable paradigm to  answer the questions alluded to in the three-fold investigative framework of Section \ref{sec:3-fold}. 

\textbf{Coverage probability for disordered lattice and  power-law asymptotics.} We obtain an explicit expression for the coverage probability for a general disordered lattice which, albeit in the form of an infinite sum of infinite products, provides a non-random object that can be investigated numerically. In fact, it is explicit enough to demonstrate that, in the regime of large threshold $\t$, the coverage probability is asymptotically a power law. This is considered in Theorem \ref{thm:SINR}.

\textbf{Optimal lattices  and connections to the Epstein
Zeta function.} We connect our search for optimal lattices
in the context of perturbed lattice models to the celebrated
Epstein Zeta function of a lattice, a kind of lattice energy
that is of intrinsic   interest in number theory and other
braches of pure mathematics. This is taken up for detailed
consideration in Section \ref{sec:energy-optimality}. In
summary, our theoretical investigations  suggest that the
behaviour of SINR (and coverage probabilities) is optimised
by considering perturbations of lattices that minimize the
Epstein Zeta function. In 2D, this leads to the choice of
the triangular lattice, whereas in 3D this suggests
considering the face centred cubic (abbrev. FCC) lattice. 

\textbf{Theoretical analysis of SINR in extremal regimes.} In wireless networks, the SINR is an object of central importance; in the setting of disordered networks it is a random variable. The SINR is notoriously  difficult to handle theoretically, evading a neat mathematical description, which makes its theoretical analysis complicated. However, in  this work, we obtain an approximate theoretical understanding of the SINR in  certain settings, focussing on regimes of small (or large) parameters, which has important implications for wireless networks (Section \ref{sec:small-parameter}). In particular, such understanding motivates our considerations for optimal lattice. 

In the regime of small $\s$,  we are able to obtain an approximation of the coverage function near a base station by a log-normal random variable with explicitly specified parameters (Section \ref{sec:small-sigma}). This leads to an approximation of the coverage probability by an explicit closed form integral that is at the same time simple enough to be numerically tractable.

 In the regime of small $\s$ and large $\t$, we carry out further theoretical analysis that suggests  explicit parametric dependencies of the coverage probability (c.f. Theorem \ref{thm:SINR}) in this regime (Section \ref{sec:large-theta}). 
Notably, our studies indicate an inverse dependence on $\s$ (in the form of an explicit power law) - an effect that is corroborated by our empirical investigations in Section \ref{sec:DisLat_comp}. It also suggests an inverse (power-law) dependence on the Epstein Zeta function of the lattice,  lending further credence to our choice of optimal lattice via a comparison of lattice energies.
 
 In the regime of small $\s$ and small $\t$, we unveil an approximation that entails linear dependence of the coverage probability near a base station on $\t$, and demonstrates worsening behaviour with increase in $\s$. Finally, our theoretical studies of the SINR indicate its monotonicity in $\s$ in the small $\s$ regime.

\textbf{A paradigm for measuring proximity between point sets.} In this work, we unveil a paradigm for measuring the proximity between point sets, which we believe would be of independent interest in a wide range of applications. Our approach considers the so-called \textit{Persistence Diagrams} of the point sets, and computes the \textit{Total Variation} distance between these (also exploring alternative possibilities such as the \textit{symmetrised nearest neighbour distance}). As elaborated in Section \ref{sec:PD}, the persistence diagram effectively captures the higher order geometry of a point set, making it a comprehensive and robust observable for this purpose.

It turns out (reference) that the closest approximation to a Ginibre random point set by a Perturbed Triangular Lattice in 2D is obtained near the disorder value $\s=0.4$, which is interestingly also the disorder level at which the SINR for this lattice model closely approximates (and slightly outperforms) the SINR for the Ginibre model. This enables us to suggest that a perturbed triangular lattice with disorder level $\s=0.4$ is an appropriate substitute for the Ginibre random network, while having the additional advantages of simplicity and robustness that are accorded by a disordered lattice model. 

We complement our investigations on measuring proximity between point sets by a much simpler observable to understand the geometry of a point set, which is  via its \textit{Nearest Neighbour Distribution}. Measuring distances between the nearest neighbour distance curves, while much cruder and less comprehensive that via their persistence diagrams, can nevertheless be considered in situations where computation simplicity is a greater consideration than  accuracy.

\textbf{Disordered lattice models interpolating with Poissonian networks.} We complete our investigations by rigorously demonstrating that disordered lattice networks interpolate lattices with Poissonian networks. This is entailed by a convergence of disordered lattices to the Poisson point process as random point configuration. In fact, we are able to demonstrate a result that holds in much greater generality; this is captured by Theorem \ref{result:conv} in Section \ref{sec:conv-Poisson}. The proof of Theorem \ref{result:conv} invokes classical theory of diffusions on infinite particle systems. This convergence is further  borne out  empirically by the convergence of the nearest neighbour distributions of the disordered lattice models to that of the Poissonian network (c.f. Fig.~\ref{fig:2dim-nearest-neighbour} and Fig.~\ref{fig:3dim-nearest-neighbour}).

\textbf{Comparison with Poissonian network.}  The crucial comparison between disordered lattice networks and the popularly used Poissonian network is that of the coverage probability distributions. It is demonstrated via our empirical investigations in Section \ref{sec:DisLat_comp} that for all values of disorder considered, the coverage probability for the disordered lattice has a better behaviour than Poisson at the same threshold $\theta$. This is indicated by the a higher value of the coverage function for the disordered lattices, which entails that the coverage function curve for those lie above that of the Poisson network (c.f. Fig. \ref{fig:2dim-SINR-2} and Fig. \ref{fig:3dim-SINR-2}).



\section{SINR in extremal parameter regimes} 
\label{sec:small-parameter}

\subsection{The regime of small $\s$: generalities} \label{sec:small-sigma-gen}

In this section, we will examine the approximate behaviour of the coverage probability in the small $\sigma$ regime, and explore the various consequences thereof in subsequent sections. We focus on the setting where the configuration $\Phi$ of base stations is a perturbation of the lattice $\La$ by the i.i.d. random variables $\{\s \xi_\la\}_{\la \in \La}$, with $\xi_\la$ being i.i.d. on $\R^d$ with unit standard deviation and $\s>0$ being the common standard error of the perturbations. For $\la \in \La$, set $X_\la = \la + \s \xi_\la$. 

For a point configuration $\Phi = \sum_{i=1}^{\infty} \delta_{X_i}$ and a given threshold $\t>0$, we will consider the coverage function at the location $a$, given by 
\begin{equation} \label{eq:covfunc}
\c(\Phi;a) := \prod_{\la \ne B}
\left(1 + \theta \frac{|X_B-a|^{d\beta}}{|X_\la-a|^{d\beta}} \right)^{-1},
\end{equation} 
where $\la$ ranges over $\La$ except $B$. This is in fact the coverage probability for given locations of the base stations (the randomness being in the fading), which is called the meta-distribution of SINR (\cite{HAE16}). 
In subsequent discussions, for a point configuration $\La$ and a point $\mathbf{a} \in \La$, we will denote by $\La_{\mathbf{b}}$ the point configuration consisting of all points of $\La$ except $\mathbf{b}$.

In Section \ref{sec:netobs} we considered translation invariant point processes as base station configurations. This would imply that the statistical  distribution of the SINR is the same at all points of space, and therefore it suffices to consider the SINR at the origin. Here, we consider an equivalent way of describing the same random variable. To this end, we let $\D$ to a  \textit{primitive unit cell} of the lattice containing the origin; e.g. for a triangular lattice in $\R^2$, it is an equilateral triangle with side length equalling the lattice spacing having the origin as a vertex. Let $\U(\D)$ denote the uniform distribution on $\D$. Let $\Phi$ be a perturbation of the lattice $\La$ as above, and let $\ol{\Phi}$ be the translation invariant version of $\Phi$. For a point process $\Phi$ on $\R^d$ and a location $p \in \R^d$, we use $\SNR_p(\Phi)$ to denote the random variable that is the SINR at $p$ for a base station configuration sampled from the process $\Phi$. 

With these notations, if $\xx \sim \U(\D)$ and statistically independent of $\Phi$, then the random variables $\SINR(\ol{\Phi})$ and $\SNR_\xx(\Phi)$ have the same statistical distribution. Thus, in order to understand $\SINR(\ol{\Phi})$, it is of interest to study the the random coverage function $\c(\Phi;\xx)$, in view of Proposition \ref{prop:SINR}.


In the regime of small $\s$, with high probability the nearest base station to $\xx$ will be the lattice perturbation of the one of the vertices of $\D$; we denote the latter by $B$. We recall from \eqref{eq:covfunc} that
\begin{align*}
\c(\Phi;\xx) 
:= & \prod_{\la \ne B} \left(1 + \theta \frac{|X_B-\xx|^{d\beta}}{|X_\la-\xx|^{d\beta}}\right)^{-1} \\
= & \prod_{\la \ne B} \left(1 + \theta \frac{|B+\xi_B-\xx|^{d\beta}}{|\la+\xi_\la-\xx|^{d\beta}} \right)^{-1}.
\end{align*}
Notice that $\xx-B$ is also uniformly distributed on a primitive unit cell of the lattice, and has the origin as the nearest lattice point. In view of this, we may focus attention to the case $B=\mathbf{0}$, the origin in $\R^d$. Thus, we are interested in the random variable 
 \begin{equation} \label{eq:cov_modified}
 \prod_{\la \ne \ze} \left(1 + \theta \frac{|X_{\ze}-\xx|^{d\beta}}{|X_\la-\xx|^{d\beta}} \right)^{-1}
= 
\prod_{\la \ne \ze} \left(1 + \theta \frac{|\s \xi_{\ze}-\xx|^{d\beta}}{|(\la-\xx) + \s \xi_\la|^{d\beta}} \right)^{-1} 
\end{equation} 
on the event that the closest lattice point to $\xx$ is the origin $\ze$. 

Observe that since $\xx$ is constrained to be closer to $\ze$ than other vertices of $\D$ in \eqref{eq:cov_modified}, there is an automatic bound on $|\xx|$ that places it close to the origin. A rich statistical behaviour arises, however, when $\xx$ is further constrained to be at the origin. In the interest of brevity, we focus on this detailed statistical structure in the present paper, postponing a more comprehensive analysis of the situation with general $\xx$ for future work. In other words, in the rest of this section, we will focus on 
\begin{equation}
\c_0(\Phi)
:= 
\prod_{\la \ne \ze} \left(1 + \theta \frac{|\s \xi_{\ze}|^{d\beta}}{|\la + \s \xi_\la|^{d\beta}} \right)^{-1} 
\end{equation} 

As we shall see, the statistical structure of $\c_0(\Phi)$ already suggests natural choices for the lattice $\La$ in 2 and 3 dimensions, which are the settings of greatest practical significance. From a modelling perspective, the focus on $\xx=0$ can be envisaged as a situation where we have a specially important point of interest (viz., the origin) and we intend to put a base station near that location, and want to focus attention on the SINR at that special point in the presence of the confounding effect of interference from farther base stations and ambient noise. 

\begin{rem} \label{rem:theory-vs-practical}
It may be noted that, although we are focussing at the origin in this section for  a detailed statistical examination of the SINR, the comparison of different point fields in this paper, such as in Section \ref{sec:DisLat_comp} are all with regard to the standard definition of $\SINR$ where the translation invariance of the point process has been taken into account. The discussions in this section are envisaged as an exploration of  some broad structural features of the perturbed lattice models as network distributions, rather than as rigorous theorems which would establish certain expected phenomena. 
\end{rem}

\subsection{The regime of small $\s$: a log-normal approximation} \label{sec:small-sigma}

We examine the logarithm of the coverage function near a base station
\[
\log \c_0(\Phi) 
= - \sum_{\la \in \Lambda_{\ze}} \log \left(1
+ \theta \frac{|\s \xi_\ze|^{d\beta}}{|\la + \s
\xi_\la|^{d\beta}} \right) 
= - \sum_{\la \in \Lambda_{\ze}}
\log \left(1 + \theta \s^{d\b} \frac{ |\xi_\ze|^{d\beta}}{|\la +
\s \xi_\la|^{d\beta}} \right) .  
\]
Since $\la \ne \ze$ and we are in the small $\s$ regime, we may expand the logarithm in a series as
\[\log \left(1 + \t \s^{d\b} \frac{ |\xi_\ze|^{d\beta}}{|\la + \s \xi_\la|^{d\beta} }\right) = \sum_{k=1}^\infty \frac{(-1)^{k+1}}{k} \t^k \s^{d\b k} \frac{ |\xi_\ze|^{d\b k}}{|\la + \s \xi_\la|^{d\b k}}.  \]
Therefore, we have for $\log \c_0(\Phi)$ the expansion
\begin{equation} \label{eq:expansion}
\log \c_0(\Phi) = \sum_{k=1}^\infty \frac{(-1)^k}{k} \t^k \s^{d\b k} \left( \sum_{\la \in \Lambda_{\ze}}\frac{ |\xi_\ze|^{d\b k}}{|\la + \s \xi_\la|^{d\b k}} \right). 
\end{equation}
As $\s \to 0$, the terms in the above expansion decay exponentially fast in $k$, so to the leading order in $\s$ we get
\begin{equation} \label{eq:leadorder}
\log \c_0(\Phi) = - \t \s^{d\b} \left( \sum_{\la \in
			       \Lambda_{\ze}} 
\frac{|\xi_\ze|^{d\b}}{|\la + \s \xi_\la|^{d\b}} \right) + O(\s^{2d\b}).  
\end{equation}
Observe that 
\begin{equation} \label{eq:quadexp}
|\la + \s \xi_\la|^{-d\b}=|\la|^{-d\b} \left| \omega_\la+\s \frac{\xi_\la}{|\la|} \right|^{-d\b}, 
\end{equation}
where $\omega_\la$ is the direction of the vector $\la \in \La$ (so that $\omega_\la$ is an element of $\mathbb{S}^{d-1}$). 

Then, in the small $\s$ regime, we can expand 
\begin{align*}
\left|\omega_\la+\s \frac{\xi_\la}{|\la|} \right|^{-d\b} 
&= \left( 1 + 2\s \frac{\langle \omega_\la, \xi_\la \rangle}{|\la|} + \s^2 \frac{|\xi_\la|^2}{|\la|^2} \right)^{- \frac{1}{2}d\b} \\
&=1 - d\s \b\frac{\langle \omega_\la, \xi_\la \rangle}{|\la|} + O(\s^2).  
\end{align*}
Combined with \eqref{eq:leadorder} and \eqref{eq:quadexp}, this implies that, to the leading order in $\s$ we have
\[ \log \c_0(\Phi) = - \t \s^{d\b} |\xi_\ze|^{d\b} \left[ \sum_{\la \in \Lambda_{\ze}} \left(\frac{1}{|\la|^{d\b}} - d \s \b \frac{\langle \omega_\la, \xi_\la \rangle}{|\la|^{d\b +1}}  \right) \right] + O(\s^{d\b+2}). 
\]
At this point, we recall the Epstein Zeta function of the lattice $\La$ (at the parameter $s$) as 
\begin{equation} \label{eq:epzeta}
\e_\La(s)=\sum_{\la \in \Lambda_{\ze}}\frac{1}{|\la|^s},
\end{equation} 
see, e.g., (\cite{titchmarsh1986theory,terras2012harmonic}). Using the Epstein Zeta function, we can express the leading order behaviour of the log coverage function near a base station as 
\begin{equation} \label{eq:logcov-Epzeta}
 \log \c_0(\Phi) =  - \t \s^{d\b} |\xi_\ze|^{d\b} \e_\La(d\b) +  d \t \b \s^{d\b + 1} |\xi_\ze|^{d\b} \left( \sum_{\la \in \La_{\ze}} \frac{\langle \omega_\la, \xi_\la \rangle}{|\la|^{d\b + 1}} \right) + O(\s^{d\b+2}).
\end{equation}

We now focus on the situation where, for a given location of the nearest base station, we are interested in the behaviour of the coverage function near a base station as the locations of the other base stations and the fading fluctuates randomly. In other words, we fix the \textit{signal}, and investigate the statistical effects of the \textit{interference} on the coverage probability. From the analysis presented above, it is clear that for a given location of the nearest base station $\xi_\ze$ and small $\s$, the coverage function near a base station $\c_0(\Phi)$ (equivalently, its logarithm) is maximised  when $\e_\La(d\b)$ is minimised.
This is the famous problem of finding the minimizing lattice for the Epstein Zeta function (\cite{SS}).

To make further analysis, we focus on the natural setting of the perturbations $\{\xi_\la\}_{\la \in \La}$ being $d$-dimensional standard Gaussians. It may be noted that, if $\xi_\la$ is a $d$-dimensional standard Gaussian, then $\eta_\la:=\langle \omega_\la, \xi_\la \rangle$ is a 1-dimensional Gaussian with mean 0 and variance $1$. As a result, it may be deduced that the random sum   \[\left(\sum_{\la \in \La_{\ze}} \frac{\langle \omega_\la, \xi_\la \rangle}{|\la|^{d\b + 1}} \right) \] is in fact a 1-dimensional Gaussian with mean 0 and variance $\e_\La(2d\b + 2)$. 

This implies that in the regime of small $\s$, the coverage function near a base station $\c_0(\Phi)$, for any given location $\xi_\ze$ of the nearest base station, is approximately a log-normal random variable (cf. \cite{johnson1995continuous}) with parameters
\begin{equation} \label{eq:log-normal-parameters}
- \t \s^{d\b} |\xi_\ze|^{d\b} \e_\La(d\b) \quad \text{ and  } \quad  \t \b \s^{d\b + 1} |\xi_\ze|^{d\b} \sqrt{\e_\La(2d\b + 2)}.
\end{equation}

For practical purposes (e.g., for using Monte Carlo methods to study the coverage probabilities, guaranteeing a high coverage probability against the randomness of the fading and the environment, etc.), it would also be of interest to work with a lattice $\La$ such that the coverage function near a base station (equivalently, it is logarithm) is the most \textit{stable}. This would amount to the choice of a lattice so as to minimize the fluctuations or the variance of $\log \c_0(\Phi)$. Once again, for a given nearest base station $\xi_\ze$, this amounts to choosing a lattice that minimizes the Epstein Zeta function $\e_\La(2d\b+2)$. The minimizing lattice for this in 2D is the triangular lattice, and in 3D, for our regime of interest $\b>1$, the minimizing lattice is conjectured to be the FCC.

We now examine the coverage probability itself, which, in view of the analysis presented above, would amount to considering the expectation of a log-normal random variable with parameters as specified in \eqref{eq:log-normal-parameters} (conditioned on $\xi_\ze$),
with $|\xi_\ze|$ following a $d$-dimensional standard Gaussian distribution. 

We first condition on $\xi_\ze$ and obtain the expectation of the log-normal as \[ \exp \left( - \t \s^{d\b} |\xi_\ze|^{d\b} \e_\La(d\b) + \frac{1}{2} \t^2 \b^2 \s^{2d\b + 2} |\xi_\ze|^{2d\b} {\e_\La(2d\b + 2)} \right). \] In the small $\s$-regime, we may approximate the exponent in this quantity by the leading term in $\s$. We can then take expectation with respect to $\xi_\ze$ following a $d$-dimensional standard Gaussian distribution to obtain the final coverage probability.

Thus, the coverage probability can be approximated, in the small $\s$ regime, by
\[ \E_{\xi_\ze \sim N(\ze,I_d)} \left[ \exp \left( - \t \s^{d\b} |\xi_\ze|^{d\b} \e_\La(d\b) \right) \right]. \]
We then observe that, if $\xi_\ze \sim N(\ze,I_d)$, then $|\xi_\ze|^2 \sim \chi^2_d$, that is, the Chi-squared distribution with $d$ degrees of freedom. The probability density function for the $\chi_d^2$ distribution on positive reals can be expressed as (\cite{johnson1995continuous})
\begin{equation}
f_{\chi^2_d}(x)=\frac{1}{2^{d/2}\Gamma(d/2)} x^{d/2 -
 1}e^{-x/2}, 
\label{eq:chi-squared}
\end{equation} 
where $\Gamma(\alpha)$ is the Gamma integral given by $\int_{0}^\infty x^{\alpha-1}e^{-x} \d x$.
We can therefore write the coverage probability above as
\begin{equation} \label{eq:covprob-formula}
 \frac{1}{2^{d/2}\Gamma(d/2)} \int_0^\infty u^{d/2-1} \exp\left( -\t \s^{d\b} \e_\La(d\b) u^{\frac{1}{2}d\b} - \frac{1}{2}u  \right) \d u.  
\end{equation}

\subsection{The regime of small $\s$ and large $\t$:  
  parametric dependencies of coverage probability} \label{sec:large-theta}

In this section, we explore the asymptotics (for large $\theta$) in the coverage probability near a base station, via the asymptotics of the integral \eqref{eq:covprob-formula}. The asymptotics of \eqref{eq:covprob-formula} below indicate a power law scaling as $\t^{-1/\b}$. In the context of Theorem \ref{thm:SINR} (which also exhibits a similar $\t^{-1/\b}$ scaling), the leading constant $C_2(\b,\s)$ in the asymptotics of \eqref{eq:covprob-formula} is analogous to the leading constant $C_1(\b,\s)$ in that theorem. The simpler algebraic structure of $C_2(\b,\s)$, however, unveils a clear power law dependency on $\s$, which suggests a similar $\s$-dependency for the limiting constant $C_1(\b,\s)$. In our considerations of this asymptotic, we consider $\s$ to be small but fixed, and $\t \to \infty$. It would be of interest to extend such suggestive  behaviour to obtain a theorem the rigorously proves a comprehensive dependency structure for $C_1(\b,\s)$; for reasons of brevity we leave that question for future research.
 


In the regime of large $\t$ and small $\s$, since $\b>1$, the integral \eqref{eq:covprob-formula} is approximately
\[ \frac{1}{2^{d/2}\Gamma(d/2)} \int_0^\infty u^{d/2-1} \exp\left( -\t \s^{d\b} \e_\La(d\b) u^{\frac{1}{2}d\b} \right) \d u.   \] Setting $v=u^{\frac{1}{2}d\b}$ and $a=\t \s^{d\b} \e_\La(d\b)$ in the above, we reduce the integral to
\begin{align}
\frac{1}{2^{d/2}\Gamma(d/2)}  \frac{2}{d\b} \int_0^\infty v^{\frac{1}{\b}-1}e^{-a v} \d v 
&= \frac{1}{2^{d/2-1}\Gamma(d/2)}  \frac{\Gamma(1/\b)}{d\b}a^{-1/\b} 
\nonumber \\
&= \frac{\Gamma(1/\b)}{d\b 2^{d/2-1}\Gamma(d/2) \e_\La(d\b)^{1/\b} \s^{d}} \cdot \t^{-1/\b} \nonumber \\
& = C_2(\b,\s)  \cdot \t^{-1/\b},  \label{eq:powerlaw}
\end{align}
where $$ C_2(\b,\s) = \frac{\Gamma(1/\b)}{d\b 2^{d/2-1}\Gamma(d/2) \e_\La(d\b)^{1/\b} } \cdot \s^{-d}.$$ 
Therefore, in the small $\s$ regime, \eqref{eq:powerlaw} recovers the large $\t$ asymptotics of the coverage probability near a base station 
as $\t^{-1/\b}$, and suggests a parametric dependency  of the limiting constant $C_1(\b,\s)$ (c.f. Theorem \ref{thm:SINR}) as an inverse power law in $\s$ as $ {\Gamma(1/\b)} \cdot \left({d\b 2^{d/2-1}\Gamma(d/2) \e_\La(d\b)^{1/\b} }\right)^{-1} \cdot \s^{-d} $. Such parametric dependence on $\s$ is corroborated empirically by Figs.~\ref{fig:2dim-SINR-1} -- \ref{fig:3dim-SINR-2}.


\subsection{The regime of small $\s$ and small $\t$ : asymptotic linearity} \label{sec:large-sigma} 

The regime of small $\t$ is important in the wireless network model for the following reason. We are able to detect a signal as soon as the SINR is above some threshold, that is, the SINR is not too low. From this perspective, it would be relevant to have $\P[\SINR > \t]$ to be high for small values of $\t$, with the pertinent question being its dependence on $\t$ as $\t \to 0$. Accordingly, we obtain an approximation of the coverage probability in the regime of small $\t$ and $\s$.

In the  integral \eqref{eq:covprob-formula}, we can approximate $ \exp\left( -\t \s^{d\b} \e_\La(d\b) u^{\frac{1}{2}d\b} \right)$ by \newline $\left(1 - \t \s^{d\b} \e_\La(d\b) u^{\frac{1}{2}d\b} \right)$ in the regime of small $\t$, and therefore obtain an approximation for the coverage probability near a base station as
\begin{align*}
\lefteqn{1 - \t \s^{d\b} \e_\La(d\b) \frac{1}{2^{d/2}\Gamma(d/2)} \int_0^\infty u^{d/2-1}u^{\frac{1}{2}d\b} e^{-\frac{1}{2}u} \d u} \\
&= 1 - \t \s^{d\b} \e_\La(d\b) \frac{2^{\frac{1}{2}d(\b+1)}\Gamma(\frac{1}{2}d(\b+1))}{2^{d/2}\Gamma(d/2)} \\
&= 1 - 2^{\frac{1}{2}d\b} \s^{d\b} \e_\La(d\b) \frac{\Gamma(\frac{1}{2}d(\b+1))}{\Gamma(\frac{1}{2}d)} \cdot \t.  \numberthis \label{eq:small-theta}  \\
 \end{align*}
Thus, in the small $\t$ regime, the coverage probability near a base station decays approximately linearly in $\t$, with the slope being given by \eqref{eq:small-theta}. We once again observe a worsening behaviour of the coverage probability with increasing $\s$.


\subsection{Monotonicity in $\s$} \label{sec:monotonicity}

 In this section, we will demonstrate that, in the regime of small $\s$, the coverage probability near a base station for Gaussian perturbed lattice networks, for a given threshold $\t$, is monotonically decreasing in the dispersion $\s$ of the perturbation.
 
 To this end, we will consider the coverage function near a base station $\c_0(\Phi,\s)$ : 
 \begin{align*}
 \log \c_0(\Phi,\s) = & - \sum_{\la \in \La_\ze} \log \left(1 + \theta \frac{|\s \xi_\ze|^{d \b}}{|\la + \s \xi_\la|^{d \b}} \right) \\
  = & - \sum_{\la \in \La_\ze} \log \left(1 + \theta |\xi_\ze|^{d\b}\frac{ 1}{ \left|\frac{1}{\s}\la +  \xi_\la\right|^{d \b}} \right).  
  \end{align*}
The derivative of $\log \c_0(\Phi,\s)$ with respect to $\s$ is given by 
 \begin{align*}
\lefteqn{\frac{\partial}{\partial \s} \log \c_0(\Phi,\s)} \\ 
  &= -\sum_{\la \in \La_\ze} \left(1 + \theta |\xi_\ze|^{d \b}\frac{ 1}{ \left|\frac{1}{\s}\la +  \xi_\la\right|^{d \b}} \right)^{-1} 
  \frac{- \b \t | \xi_\ze|^{d \b}}{\left|\frac{1}{\s}\la +  \xi_\la\right|^{d \b + 2}} \frac{\partial}{\partial \s} \left[ \left|\frac{1}{\s}\la +  \xi_\la \right|^{2}  \right] 
\numberthis \label{eq:monotonic}
 \end{align*}
and 
 \begin{align*} 
 \frac{\partial}{\partial \s} \left[ \left|\frac{1}{\s}\la +  \xi_\la \right|^{2} \right] 
&= 2 \langle \frac{1}{\s}\la +  \xi_\la ,  \frac{-1}{\s^2}\la \rangle \\
&= -2 \left( \frac{|\la|^2}{\s^3} + \frac{\langle \xi_\la ,  \la \rangle }{\s^2} \right) \\
&= -2 \frac{|\la|^2}{\s^3}  \left(  1 + \s \cdot \frac{ \langle \omega_\la, \xi_\la \rangle}{|\la|}  \right), 
  \end{align*}
  where $\omega_\la \in \mathbb{S}^{d-1}$ in the direction of the vector $\la \in \R^d$, as before.
  
In the small $\s$ regime, with high probability  $\left|\s \cdot \frac{ \langle \omega_\la, \xi_\la \rangle}{|\la|} \right| \ll 1$ for all non-zero $\la \in \La_\ze$, which, in light of \eqref{eq:monotonic}, implies that the logarithmic 
derivative $\frac{\partial}{\partial \s} \log \c_0(\Phi,\s)$ is negative.

Thus, in the small $\s$ regime, the coverage function near a base station is, with high probability, monotonically decreasing in $\s$. Since the coverage probability is the expectation of the coverage function, this indicates in the small $\s$ regime, the coverage probability near a base station would decrease with increasing $\s$.

 \section{Choice of lattice and energy optimality}
\label{sec:energy-optimality}
An important question that arises in studying disordered lattices as models for wireless base stations is the choice of the lattice which we perturb. In this direction, we are guided by considerations of energy optimality of lattices, which appears rather interestingly in our investigations of the coverage probability.

To be more precise, we can consider Gaussian perturbations of a lattice $\Lambda$ with dispersion parameter $\sigma$. The coverage probability curve is then a function of $\sigma$. Although this function is not analytically tractable, we can nevertheless expand it in a series in $\sigma$ in the regime where the parameter $\sigma$ is small, as we demonstrate in Section \ref{sec:small-sigma}. The coefficients of this expansion, naturally, are functionals of the lattice $\Lambda$. 

 It turns out, as in \eqref{eq:logcov-Epzeta}, that the coefficient of the leading term in this expansion is the celebrated Epstein Zeta Function of the lattice $\Lambda$, which, heuristically speaking, can be thought of as a lattice energy, and  has deep connections to sphere packing, number theory, crystallography, quantum field theory and other diverse areas of mathematics and physics (see, e.g.,\cite{titchmarsh1986theory,terras2012harmonic,elizalde2012ten,chowla1949epstein}).  Maximizing the coverage probability at a given level $\theta$ would amount to considering the lattice that minimizes the Epstein Zeta Function,  a classical problem in its own right, that has connections to other fundamental questions like the crystallisation conjecture (\cite{SS,blanc2015crystallization,dettmann2012new,betermin-lecture,serfaty2015coulomb,sandier20152d,grabner2014point}). In 2D Euclidean space, the minimizing lattice for the Epstein Zeta function is the triangular lattice, which is the focus of our attention. In 3D Euclidean space, a rigorous understanding of minimal lattices for the Epstein Zeta function is limited, but, as we argue in Section \ref{sec:3D}, a natural choice to focus on is the Face Centered Cubic (FCC) lattice and its Gaussian perturbation (i.e., the perturbed FCC (abbrv. PFCC)).

\section{Comparison of random point sets and persistent homology}
\subsection{Comparison via persistent diagrams}
\label{sec:PD}
One of the issues that we address in our investigations is
the comparison between point processes that are candidates
for modelling the locations of the wireless base
stations. An interesting question on its own right, this is
also motivated by the desire to find disordered lattices
which are appropriate ``substitutes'' for the random matrix
networks - in particular, this entails a comparison between
the two point processes. To compare two point
configurations, we appeal to their \textit{persistence
diagrams} {(abbrv. PDs)}, a tool that has  recently attracted a lot of
interest in topological data analysis. For details on
persistence diagrams and random topology, we refer the reader to recent articles (\cite{edelsbrunner2008persistent,weinberger2011persistent, yogeshwaran2015topology, HST18}), to provide a partial list. In the limited scope available to us here, we give a succinct heuristic description as follows, with concrete definitions in Appendix~\ref{a:PD}. 

We can consider random geometric graphs on a point set by
connecting pairs of points by an edge when they are nearer
than a given threshold $\eps$, and more generally form a
$k$-clique out of $k$ points if the $\eps$ balls around
these $k$ points intersect pairwise. As $\eps$ varies from
$0$ to $\infty$, more points get connected with each other. 
The topological properties of the point set can be discussed in terms of  \textit{homology groups} related to certain \textit{complexes} arising out of this construction.

Heuristically speaking, such considerations entail that the fundamental topological properties  of the point set are captured by certain
\textit{holes} embedded in the point set fattened by the $\eps$-balls.
 Holes appear and disappear (due to the overlap of the
balls around the points) as the connectivity threshold
$\eps$ changes. The most significant ones among these
holes  are those that
\textit{persist} for a long time, that is, the thresholds for
their appearance and disappearance are well-separated. Morally, such \textit{persistent} holes reflect a fundamental feature of the
point set compared to less persistent ones, whose appearance could be attributed to random noise.
The PD corresponding to the point set is a 2D
plot against each other of these two thresholds (resp. for
appearance and disappearance) for these holes. 

We compare two point configurations by comparing their
persistence diagrams, which brings us to the natural
question of comparing two persistence diagrams. To this end, we 
adopt two approaches.

For the first approach, we consider the persistence diagrams as atomic probability measures. We then proceed to compute the \textit{Total Variation 
distance} (abbrv. TV distance) between these two measures. For two probability measures having densities $f$ and $g$ on the same Euclidean space, it can be expressed simply as the integral $\int |f(x) - g(x)| \d x$. For computational simplicity, we convolve the atomic measures given by the PDs with Gaussians having a small dispersion (equal to $1/2$) and discount the contribution from atoms near diagonal of the PD (cf. \cite{KHF16}), and take the TV distance between the resulting measures with densities via the above formula.

For the second approach, we make use of the fact that PDs themselves are 2D finite
point sets, and we compute the \textit{distance} between two
PDs by computing their  \textit{symmetrized
nearest-point distance}  (\cite{mateu2010distances}). ln particular,
let $X,Y$ be two finite point configurations in a metric
space (equipped with the metric $d$), and for $x \in X$, let
$y(x)$ be the nearest point to $x$ in $Y$ (making arbitrary
choices to break ties, if any). Then define a distance
between the configurations $X,Y$ as
$\overline{d}(X,Y)=\sum_{x \in X} d(x,y(x))$. Now
$\overline{d}(\cdot,\cdot)$ is clearly asymmetric in its
arguments, so we define the symmetrized nearest-point
distance between $X$ and $Y$ as $D(X,Y)=\overline{d}(X,Y) +
\overline{d}(Y,X)$.

We mention in passing that other metrics for measuring distance between PDs have been considered, e.g. the 
\textit{bottleneck distance}  (\cite{cohen2007stability}). However, the calculation of such metrics on given data sets can often be highly computationally intensive. In this article, we focus on the total variation and symmetrized nearest-point distances for their simplicity and computational tractability. The comparative study of wireless network distributions (and more generally, point processes) with  respect to other metrics on PDs certainly warrant further investigation;
 though, as exemplified by the  consistency of the minimality threshold of around $\sigma=0.4$ across our chosen metrics in this paper,  we expect our broad conclusions to be more or less robust to the specifics of the metrics involved.

\begin{figure}[htbp]
\begin{minipage}{0.5\hsize}
\begin{center}
\includegraphics[scale=0.55]{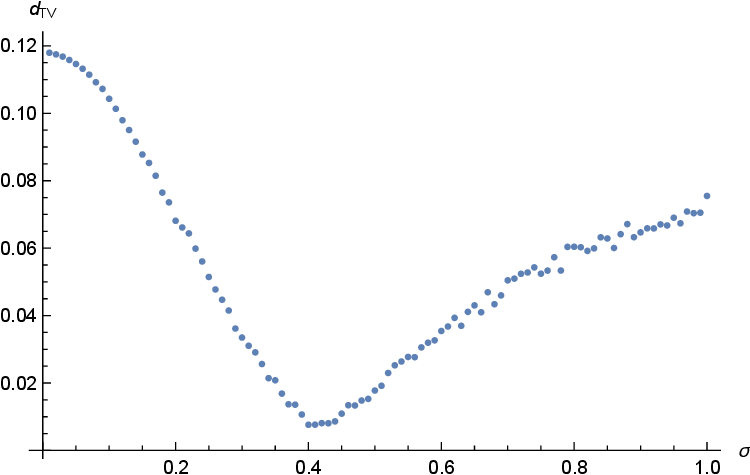}
\end{center}
\end{minipage}
\begin{minipage}{0.5\hsize}
\begin{center}
\includegraphics[scale=0.55]{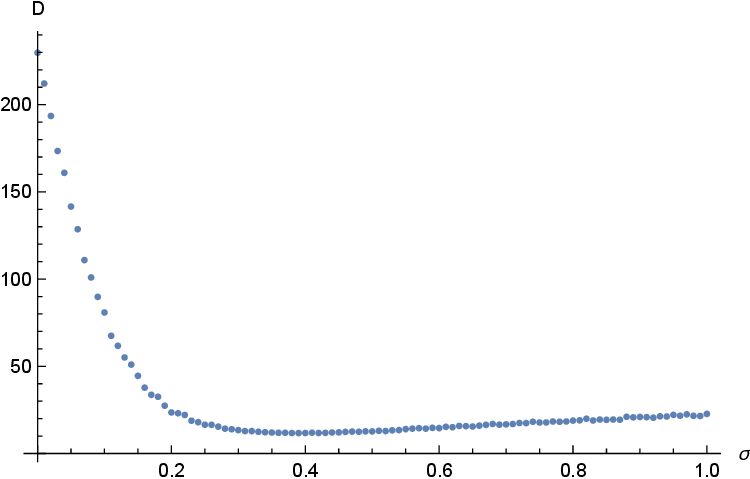}
\end{center}
\end{minipage}
\caption{Distance between Persistence Diagrams of
 Ginibre and PTL  point processes. Left panel: Total Variation distance between the PDs considered as point measures. Right panel:  $D(\cdot,\cdot)$ distance between the PDs. 
Around $\sigma=0.4$, the PDs are
 the closest in both metrics. }
 \label{fig:PD}
\end{figure}

It may be noted that PDs of random point sets are random 2D point sets themselves.  So, eventually we are comparing another pair of (random) point sets rather than the original point processes directly. 
However, the comparison of PDs enjoys two advantages over a direct comparison of the point processes. 

First, the PDs are always 2D point sets located on the same domain (the positive 
quadrant in $\R^2$). As such, their comparison via the metric $\overline{d}$ is always reasonable, irrespective of the nature of the ambient spaces in which the original point processes are embedded (e.g., this allows for the comparison of point processes that are embedded in Euclidean spaces of different dimensions). Thus, the comparison of PDs truly focuses on a desirable comparison of the structural properties of the 
point set and  is oblivious to extraneous factors like its physical embedding. 

Secondly, passing from the original point process to the PD controls the influence of outliers, and consequently has a stabilizing effect with respect to random noise, which is better for comparison purposes.

\subsection{Comparison via nearest neighbour distribution}
\label{sec:nnd}
An important aspect of point patterns is their nearest
neighbour distribution (abbrv. nnd), that is, the
statistical distribution of the typical point from its
nearest neighbour. A renowned instance of this is the 1D
case, where this reduces to the famous level spacing
distribution that has been widely investigated in the
context of random matrix theory (\cite{mehta2004random}). A
significant result in this direction, that traces its origin
as far back as Wigner's work, is that the level spacing
distributions of Gaussian random matrix ensembles are very
different from, say, the \textit{independent case} (i.e.,
the Poisson point process), and a great volume of research
has been dedicated to successfully establishing the
conjecture that such behaviour is, in fact,
\textit{universal} (i.e., not dependent on the Gaussianity
or other specifics of the matrix distributions)  (see, e.g.,
\cite{erdos2012universality,tao2011random}).  In particular,
the repulsion among the points in the Ginibre ensembles is captured by the fact that the \textit{level spacing }(or \textit{gap}) distributions (the so-called Wigner distributions) have a vanishing density near the origin, whereas for independent points, the gaps follows an exponential  distribution (which, in particular, has its mode at the the origin). While  persistence diagrams certainly capture  point sets in a much more comprehensive manner, the nnd-s are much simpler and succinct summaries (albeit more limited), and can be considered as an alternative possibility for comparing point sets when computational load is  a bigger consideration than high accuracy.

We  examine nnd-s in 2D and 3D, which are much less
understood than the 1D case. While the nnd for a perturbed
lattice model can be expressed, in principle, as an infinite
series in terms of various Gaussian probabilities, in
practice such expressions are of little utility as they do
not shed much light on the statistical or analytical
properties of the relevant distribution. In this article, we
undertake an empirical investigation of the nnd-s for
perturbed lattice models, comparing them against their
counterparts for the Ginibre and the
Poisson models, as relevant. Rigorous analytical exploration
of their distributional properties, for instance in
comparison to the Poisson and the Ginibre
models, would be a natural avenue for future research that
appears to be beyond the reach of current methods. We
observe in passing that, as $\sigma$ increases, the nnd-s
for perturbed lattice models converge to that of the
Poisson, thereby corroborating the overall convergence at
the level of point processes. 

We do not extensively use nnd-s as a metric for comparison of point sets in our
investigations of spatial network models in this paper,
we defer the results of the empirical investigations on nnd-s
to Appendix~\ref{a:nnd}. 


\section{Disordered lattices for optimal SINR in 2D and 3D}
\label{sec:DisLat_comp}

\subsection{2D planar networks} \label{sec:2D}
The lattice which minimizes the Epstein zeta function in 2D is the triangular lattice, where the fundamental lattice domain is in the shape of a rhombus of unit sidelength in $\R^2$ (\cite{rankin1953minimum,diananda1964notes,cassels1959problem,ennola1964lemma,ennola1964problem}). This is the same as the famous Abrikosov lattice that plays an important role in statistical physics theoretical physics, e.g. through its emergence as the ground state in the celebrated theory of Ginzburg-Landau vortices and  Coulomb gases (a.k.a. the 2D one component plasma) (\cite{abrikosov1957magnetic,sandier2008vortices,serfaty2015coulomb,sandier20152d}).

For our purposes, we consider Gaussian perturbations of the triangular lattice, with the lattice spacing scaled  so as to have on the average one point per unit area. We will refer to this point process as the \textit{Perturbed Triangular Lattice} (abbrv. PTL). We study the coverage probability of these perturbed lattices indexed by the dispersion $\sigma$ of the Gaussian perturbations, and plot the coverage probability against the corresponding threshold $\theta$. 

The coverage probabilities are computed via Monte Carlo simulations, generating a large number of realizations of the random point configurations, computing the corresponding SINR, and obtain the coverage probability from the histogram of SINRs. To be precise, we generate 20,000 samples for each point process (PTL, Ginibre, Poisson) to compute the mean of SINR as a function of $\theta$. 
Thus, we obtain a family of curves plotting $p_c(\theta)$ against $\theta$, the curves being indexed by the parameter $\sigma$. The results are exhibited in Fig.~\ref{fig:2dim-SINR-1} and Fig.~\ref{fig:2dim-SINR-2}.

These plots exhibit several interesting features. We empirically observe a strict monotonicity in $\sigma$: for $\sigma=0$, i.e. no disorder, the coverage probability $p_c(\theta)$ is the highest for a given threshold $\theta$, and $p_c(\theta)$ decreases monotonically as $\sigma$ increases, always staying above the corresponding curve for the Poisson distribution, but approaching it as $\sigma \to \infty$. 

One of our goals is to obtain a disordered lattice model
which can substitute for the random matrix network, both in
terms of similarity as point configurations as well as in
terms of the behaviour of the SINR. Heuristically, for
$\sigma$ near 0 the model would strongly resemble the
original lattice, whereas for $\sigma$ large, Poissonian
behaviour sets in. In particular, when $\sigma$ is too
small, we can trace most points back to the lattice site
where it came from, whereas for $\sigma$ too large, the
\textit{memory} of an ordered structure is completely
lost. It is natural, therefore, to look for random matrix
behaviour somewhere in between, away from this extremities
for the disorder parameter.  A rule of thumb, therefore,
might be to look at disorder of magnitude  like half of the
lattice spacing, so that the identification of the parent
sites for the perturbations of neighbouring lattice points becomes only just improbable.

As discussed in Section \ref{sec:PD}, we compare lattice
perturbations with the Ginibre network by comparison of their
PDs, which in turn is achieved by computing two alternate
metrics. The first metric is their TV distance as atomic
probability measures (calculated after smoothing by a
localised Gaussian kernel). The second metric is  the
symmetrized nearest point distance between them. These
measures, once again, are computed via Monte Carlo
simulations of the point processes, using 100 samples for
each value of $\sigma$.  As unveiled in Fig.~\ref{fig:PD},
it turns out that in both metrics, the closest approximation
of the Ginibre network (i.e., the Ginibre network) by a Gaussian Perturbed Triangular Lattice is achieved around $\sigma=0.4$. In this vein, we make particular note of the relatively sharp convexity of the TV distance curve near its minimum around $\sigma=0.4$.

The nearest neighbour distributions (abbrv. nnd-s) of the three point processes are displayed in Fig.~\ref{fig:2dim-nearest-neighbour}, for the Poisson, Ginibre and PTL for various values of $\s$. The plots for the nnd-s are generated empirically via Monte Carlo simulations, using 10,000 realizations of the relevant point process for each curve. It may be observed that, as $\s$ increases, the nnd of the corresponding PTL converges to that of the Poisson point process. The closest the nnd for a PTL gets to the nnd of the Ginibre point process is around $\s=0.4$. This is also the perturbation value around which the distance between the corresponding PDs is minimized, and the SINR vs threshold curves nearly overlap.

It may be observed that for $\sigma = 0.4$, the coverage
probability curve of the disordered lattice lies close to,
and in fact, slightly above the corresponding curve for the
Ginibre network (see Fig.~\ref{fig:2dim-SINR-1} and
Fig.~\ref{fig:2dim-SINR-2}). This indicates that a wireless
network based on a disordered triangular lattice with
disorder $\sigma$ around 0.4 can be taken as an effective
replacement for the Ginibre network, in terms of coverage
probability for the PTL performing at least as well as the
Ginibre. We believe that this can have significant impact on the design and investigation of random wireless networks, particularly from the point of view of large scale numerical and computational research.

\def\size{0.53}
\begin{figure}[htbp]
\begin{minipage}{0.52\hsize}
\begin{center}
\includegraphics[scale=\size]{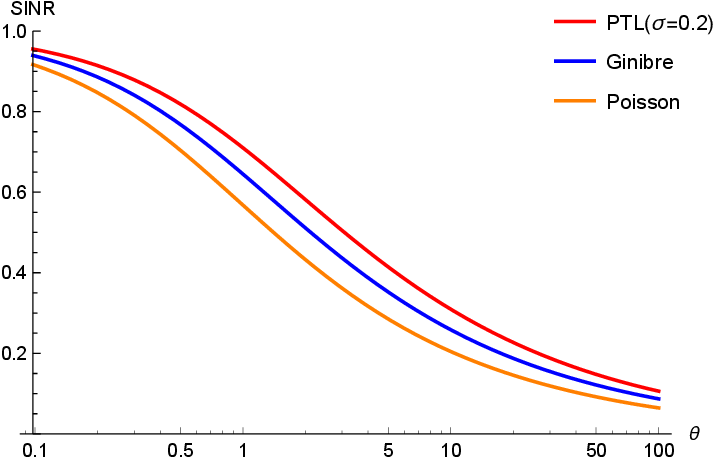}
\end{center}
\end{minipage}
\begin{minipage}{0.48\hsize}
\begin{center}
\includegraphics[scale=\size]{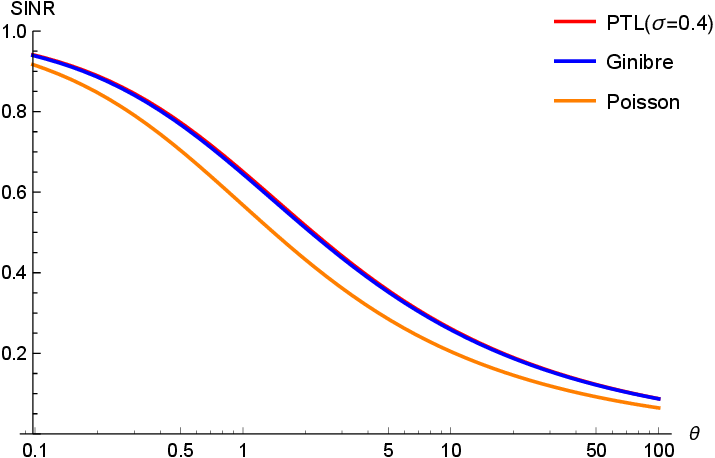}
\end{center}
\end{minipage}\\[4mm]
\begin{minipage}{0.52\hsize}
\begin{center}
\includegraphics[scale=\size]{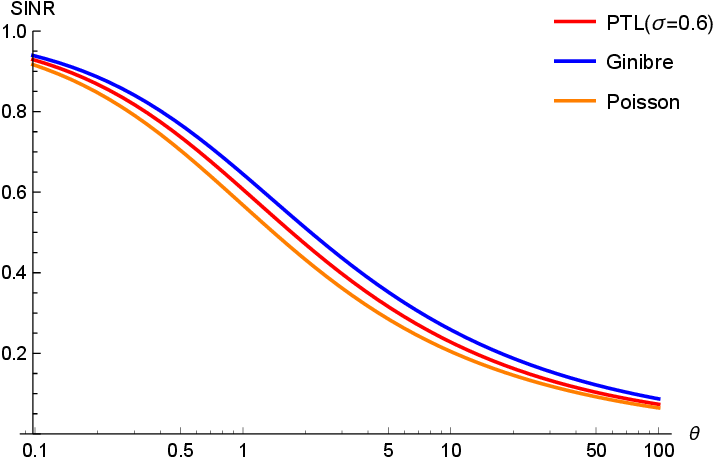}
\end{center}
\end{minipage}
\begin{minipage}{0.48\hsize}
\begin{center}
\includegraphics[scale=\size]{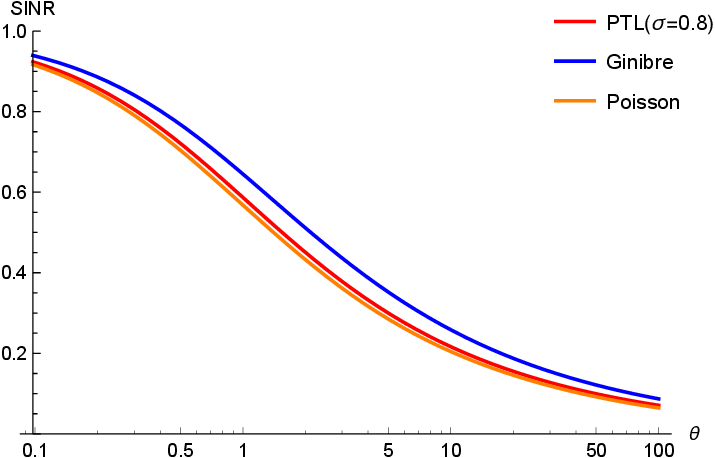}
\end{center}
\end{minipage}
\caption{Coverage probability vs SINR threshold curves for Perturbed Triangular Lattice(PTL), Ginibre and
 Poisson ($\sigma=0.2, 0.4, 0.6, 0.8$) for $\beta=2$. }
\label{fig:2dim-SINR-1}
\end{figure}
 \begin{figure}[htbp]
 \begin{center}
 \includegraphics[scale=0.6]{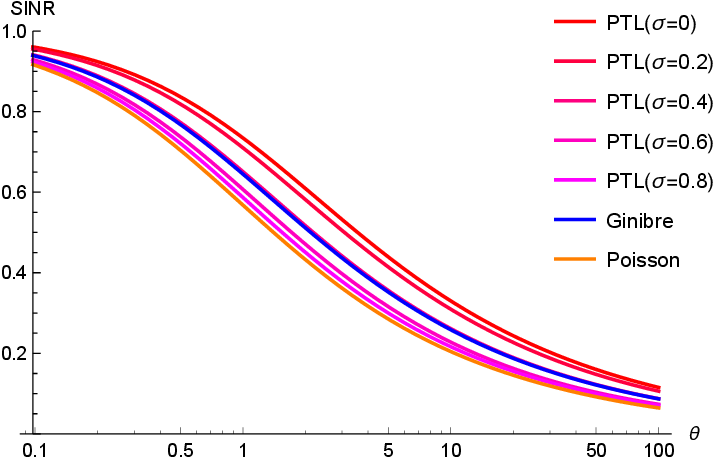}
 \end{center}
 \caption{Coverage probability vs SINR threshold curves for Perturbed Triangular Lattice (PTL) ($\sigma=0, 0.2, 0.4, 0.6, 0.8$),
  Ginibre, Poisson for $\beta=2$. 
Coverage probability curves for PTL decrease as $\sigma$ increases, match Ginibre, and tend towards that of Poisson. }
\label{fig:2dim-SINR-2}
 \end{figure}

\subsection{3D spatial networks} \label{sec:3D}

The ordinary cubic lattice has a cube with eight lattice points (i.e., $[0,a]^3$ with a lattice point on each corner) as a unit cell, which is denoted by $\mathcal{C}$, and it is formed as $\bigcup_{x \in a\Z} (\mathcal{C} + x)$.
The unit cell of the Body Centered Cubic (abbrev. BCC)
lattice is $\mathcal{C}$ with one more lattice point
$(a/2,a/2,a/2)$ in the center and
that of	Face Centered Cubic (abbrev. FCC) lattice is
$\mathcal{C}$ with a lattice point on each face (see (cf. \cite[Example 8.3]{S13}) for BCC and FCC lattices). 

In 3D Euclidean space, identification of the base lattice to disorder poses a challenge, stemming from the fact that the minimizing lattice for the Epstein zeta function is not fully understood in 3D (\cite{SS,betermin-lecture,betermin2019local}). 
It was shown by Ennola that the FCC lattice as well as the BCC lattice are local minimizers of the Epstein
Zeta function in the space of lattices. 
However, there is no definitive understanding of what the global minimizer is for a given value of $s$. It was conjectured by Sarnak and Strombergsson that, for $s>3$ (cf. formula \eqref{eq:epzeta}), the minimizing lattice for the Epstein Zeta function $\e_\L(s)$ in 3D Euclidean space is the FCC (\cite{SS,betermin-lecture,betermin2019local}). In our study, therefore, we will henceforth be using the disordered FCC lattice, leaving the issue of a completely rigorous optimal choice of lattice to future breakthroughs in the theory of lattices.

In our investigations, we consider Gaussian perturbations of the FCC lattice and the cubic lattice (i.e., $\mathbb{Z}^3$), for various values of the standard deviation $\s$, and compare them against the Poisson point process of the same intensity in 3D Euclidean space. We plot the coverage probability vs thresholds for networks given by these processes at various values of $\s$. The plots are generated empirically via Monte Carlo simulations, using 10,000 realizations of the relevant point process for each curve.  The results are displayed in Fig.~\ref{fig:3dim-SINR-1} and \ref{fig:3dim-SINR-2}. The plots show the perturbed FCC (abbrv. PFCC) network to be the clearly the best performer for small values of $\s$, and it is $p_c(\t)-\t$ curve tends to match that of the Perturbed Cubic Lattice (abbrv. PCL) network for larger values of $\s$ starting from around $\s=0.5$, or half the lattice spacing.  For all values of $\s$, however, the PFCC seems to be performing at least as well as the PCL, suggesting that it would be the better choice as the base lattice to perturb. 

 In Fig.~\ref{fig:3dim-SINR-2}, we focus on the behaviour of the $p_c(\t)$-$\t$ curve of the PFCC lattice for varying $\s$, and compare them with the corresponding curve for the Poisson network. We observe worsening performance of the networks with increasing values of $\s$, reflected in the fact that the curves keep  getting pushed down. The Poisson network, for its part, appears to be performing  worse compared to the PFCC (as well as the PCL) uniformly for all values of $\s$, and it is only for large $\s$ that the coverage probability vs threshold curve of the perturbed lattice models tend to converge to that of the Poisson model. 

We complement our study by an investigation of the nearest
neighbour spacing distributions of these point process in
$\R^3$. The plots for the nearest neighbour distributions
are generated empirically via Monte Carlo simulations, using
10,000 realizations of the relevant point process for each
curve. The results are displayed in
Fig.~\ref{fig:3dim-nearest-neighbour}. It may be observed
that the nnd for the PFCC process is in general a bit more
concentrated around its mode than the PCL or the Poisson,
indicating a more homogeneous distribution of points in
space. The Poisson point process, on the other hand, exhibits a very \textit{flat} nnd profile, indicating both clumps of points and large holes. It may also be observed that the nnd curves for the PFCC and the PCL converge around $\s=0.5$, which is also the value of the perturbation at which the coverage probability curves coincide. Finally, for large values of $\s$ the nnd curves for  both the perturbed lattice models converge to that of the Poisson, reflecting distributional convergence of the underlying point processes.

\def\size2{0.5}
\begin{figure}[htbp]
\begin{minipage}{0.52\hsize}
\begin{center}
\includegraphics[scale=\size2]{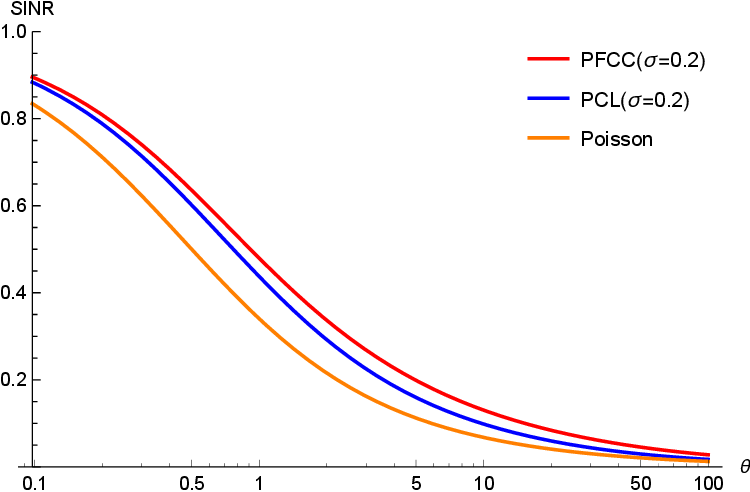}
\end{center}
\end{minipage}
\begin{minipage}{0.48\hsize}
\begin{center}
\includegraphics[scale=\size2]{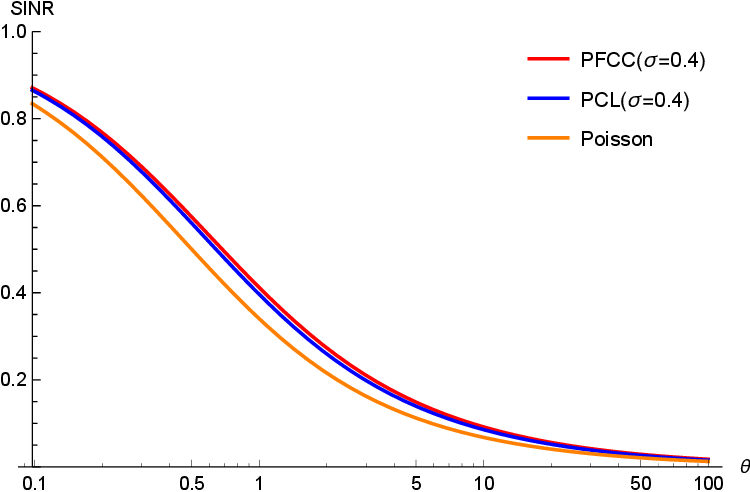}
\end{center}
\end{minipage}\\[4mm]
\begin{minipage}{0.52\hsize}
\begin{center}
\includegraphics[scale=\size2]{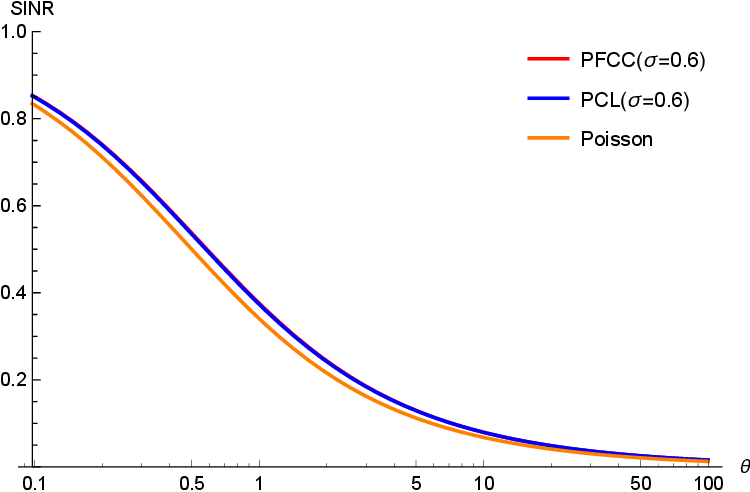}
\end{center}
\end{minipage}
\begin{minipage}{0.48\hsize}
\begin{center}
\includegraphics[scale=\size2]{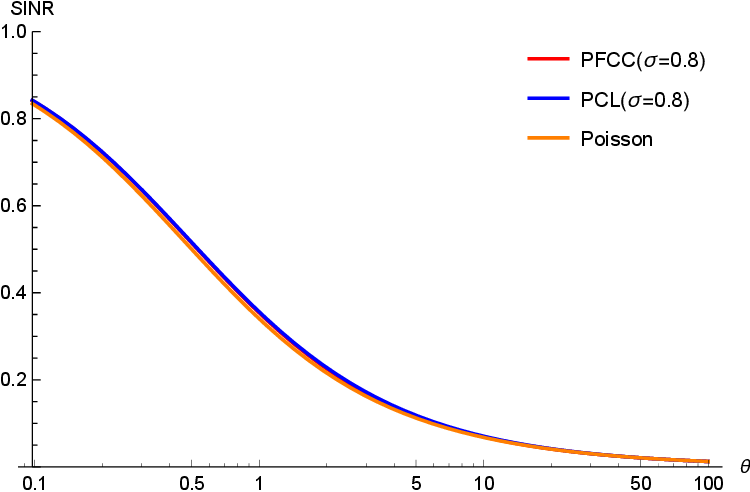}
\end{center}
\end{minipage}
 \caption{Coverage probability vs SINR threshold curves for
 Perturbed FCC (PFCC), Perturbed Cubic Lattice (PCL)
 ($\sigma=0.2, 0.4, 0.6, 0.8$), Poisson for $\beta=4/3$. }
\label{fig:3dim-SINR-1}
\end{figure}
\begin{figure}[htbp]
\begin{center}
\includegraphics[scale=0.65]{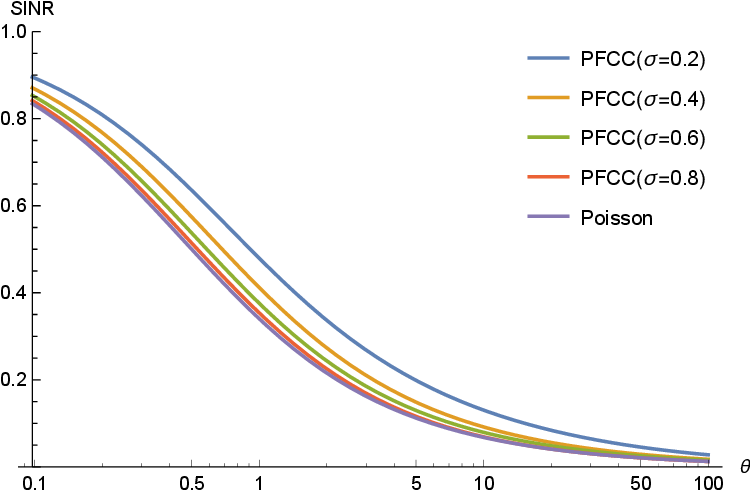}
\end{center}
\caption{Coverage probability vs SINR threshold curves for Perturbed FCC lattice (PFCC) ($\sigma=0.2, 0.4, 0.6, 0.8$) from
 above and Poisson for $\beta=4/3$. Coverage probability curves for PFCC decrease as $\sigma$ increases and tend towards that of Poisson. }
\label{fig:3dim-SINR-2}
\end{figure}

\newpage
\section{Interpolation with Poisson in the high noise regime} \label{sec:conv-Poisson}
As we have seen in \ref{sec:2D} and \ref{sec:3D} for SINR and also 
Fig.~\ref{fig:2dim-nearest-neighbour} and
Fig.~\ref{fig:3dim-nearest-neighbour} in Appendix~\ref{a:nnd} for nnd-s of point processes, when the dispersion $\sigma$ gets large, its SINR and nnd become close to those of Poisson. 
In this section, we will rigorously demonstrate the convergence of the coverage probabilities of the Gaussian perturbed lattice networks to that of the Poisson network for a given threshold $\t$, for any starting lattice $\L$. 
 In fact, we will establish a more general result in Theorem \ref{result:conv}. To this end, we introduce the following notation. For any point process $\p$ on $\R^d$, define the quantity  
 \[\Sigma_R^\b(\p):=\sum_{\mathbf{\zeta} \in \p : |\mathbf{\zeta}|>R } \frac{1}{|\mathbf{\zeta}|^{d \b}}. \]
 We are now ready to state
 \begin{thm} \label{result:conv}
 Let $\{\Xi_\s\}_{\s>0}$ be a collection of point processes on $\R^d$ converging in distribution to a point process $\Xi$ on $\R^d$ as $\s \to \infty$, such that the random variables 
 $\Sigma_R^\b(\Xi_\s) \to 0$ in probability 
 as $R \to \infty$, uniformly over the collection $\{\Xi, \{\Xi_\s\}_{\s>0}\}$. Then, for any $\t>0$, the coverage probabilities $p_c(\t,\Xi_\s)$ for $\Xi_\s$ converge, as $\s \to \infty$, to the coverage probability 
 $p_c(\t,\Xi)$ for the point process $\Xi$.
  \end{thm}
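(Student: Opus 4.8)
The plan is to realise the coverage probability as the expectation of a bounded functional of the point configuration and to prove convergence of these expectations by truncation. Recall from Proposition~\ref{prop:SINR} that, for a configuration $\Phi=\sum_i\delta_{X_i}$ whose point of least modulus is $X_B$, one has $p_c(\t,\Phi)=\E[\c(\Phi)]$ with $\c(\Phi)=\prod_{j\ne B}(1+\t|X_B/X_j|^{d\b})^{-1}\in[0,1]$. Since $\c$ involves all the points, it is not continuous for the topology of point-process convergence, so I would approximate it by the truncated functional $\c_R(\Phi)=\prod_{j\ne B,\,|X_j|\le R}(1+\t|X_B/X_j|^{d\b})^{-1}$, which depends only on $\Phi\cap B(0,R)$.

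First I would treat $\c_R$. For all but countably many $R$ the law of $\Xi$ charges neither the event that some point sits exactly at distance $R$ from the origin, nor the (null) events of a tie for the least modulus or a point at the origin; fixing such an $R$, the map $\Phi\mapsto\c_R(\Phi)$ is bounded and $\Xi$-a.s.\ continuous. Hence by the continuous mapping theorem applied to $\Xi_\s\to\Xi$ in distribution, $\E[\c_R(\Xi_\s)]\to\E[\c_R(\Xi)]$ as $\s\to\infty$, for each such $R$.

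Next I would estimate the truncation error $\E|\c(\Phi)-\c_R(\Phi)|$ uniformly over $\Phi$ ranging in $\{\Xi\}\cup\{\Xi_\s\}_{\s>0}$. Factoring $\c=\c_R\cdot T_R$ with $T_R=\prod_{j\ne B,\,|X_j|>R}(1+x_j)^{-1}$ and $x_j=\t|X_B/X_j|^{d\b}\ge0$, the identity $(1+x)^{-1}=1-\tfrac{x}{1+x}$ and Weierstrass' product inequality give $1\ge T_R\ge 1-\sum_{|X_j|>R}x_j\ge 1-\t\,|X_B|^{d\b}\,\Sigma_R^\b(\Phi)$, whence
\[
0\le \c_R(\Phi)-\c(\Phi)=\c_R(\Phi)\,(1-T_R)\le \min\!\big(1,\ \t\,|X_B|^{d\b}\,\Sigma_R^\b(\Phi)\big).
\]
Pick $M$ so that $\sup_\Phi\P(|X_B(\Phi)|>M)$ is as small as desired --- the nearest-point distances form a tight family in the cases of interest (a perturbed lattice always has a lattice site within $O(1)$ of the origin, and the Poisson limit has an a.s.\ finite nearest point), and this may be imposed as a mild standing non-degeneracy hypothesis; on $\{|X_B|\le M\}$ the bound above is at most $\t M^{d\b}\,(\Sigma_R^\b(\Phi)\wedge 1)$, and the assumed convergence $\Sigma_R^\b(\Phi)\to 0$ in probability uniformly over the family, together with bounded convergence, yields $\sup_\Phi\E[\Sigma_R^\b(\Phi)\wedge 1]\to 0$ as $R\to\infty$. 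Therefore $\sup_\Phi\E|\c(\Phi)-\c_R(\Phi)|\to 0$ as $R\to\infty$. Combining the three ingredients, for each admissible $R$,
\[
\limsup_{\s\to\infty}\big|p_c(\t,\Xi_\s)-p_c(\t,\Xi)\big|\ \le\ 2\,\sup_\Phi\E\big|\c(\Phi)-\c_R(\Phi)\big|,
\]
and letting $R\to\infty$ completes the argument.

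I expect the truncation estimate to be the main obstacle. The continuous-mapping step is routine once $R$ is chosen to avoid the exceptional radii, and the passage $R\to\infty$ is tailor-made for the hypothesis on $\Sigma_R^\b$; the one point requiring care is that the prefactor $|X_B|^{d\b}$ reflects the geometry near the origin rather than at infinity, so it is not controlled by $\Sigma_R^\b$ and must be handled separately via uniform tightness of the nearest-point distance. A secondary technical nuisance is verifying the $\Xi$-a.s.\ continuity of $\c_R$ carefully --- i.e.\ checking that the limiting point process is non-degenerate enough (no ties in modulus, no atom at the origin or on the truncating sphere) --- which is immediate for the absolutely continuous processes arising here.
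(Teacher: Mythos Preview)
Your argument is correct and follows the same truncation strategy as the paper: approximate the coverage functional by one depending only on points in a bounded region, pass to the limit in $\s$ for the truncated version via weak convergence, control the truncation error through the tail-sum hypothesis on $\Sigma_R^\b$, and conclude by the triangle inequality. The differences are minor --- the paper replaces your hard cutoff at radius $R$ by a smooth compactly supported bump $\phi$ (which spares it the exceptional-radii discussion), and you are more explicit than the paper about needing uniform tightness of $|X_B|$ to handle the prefactor $|X_B|^{d\b}$ in the truncation bound, a point the paper's proof passes over.
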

We defer the proof of Theorem \ref{result:conv} to Section \ref{sec:proofs}. In what follows, we will lay down the argument to demonstrate the convergence to Poisson for perturbed lattice networks in the regime of high noise, using Theorem \ref{result:conv}.
  
In order to deduce the convergence of the coverage
probabilities of Gaussian lattice perturbations $\Xi_\s$
(with variance $\s$) to that of the Poisson network $\s$, we
first observe that as $\s \to \infty$, the point processes
$\Xi_\s \to \Xi$ in distribution, in the vague topology on
the space of locally finite point confuigurations on
$\R^d$. To see this via a simple dynamical argument, we
observe that an infinite system of particles on $\R^d$,
starting at the nodes of a lattice and evolving via
non-interacting Brownian motions converges in the long time
limit to the homogeneous Poisson point process on $\R^d$ having the same particle density as the initial configuration. For details, we refer the interested reader to \cite{stone1968theorem}, in particular to Theorem 2 and Example 2 therein. Now, if we start this dynamics from the initial configuration having one particle each at the lattice sites of $\L$, then after time $\s$, the point configuration evolves to have the same distribution as the perturbed lattice model at disorder $\s$. Thus, by virtue of this coupling we observe that, in the limit $\s \to \infty$, the perturbed lattice model converges to the homogeneous Poisson point process.

It, therefore, suffices to establish that the uniform convergence to zero of the tail sums \[\Sigma_R^\b(\Xi_\s)=\sum_{\mathbf{\zeta} \in \Xi_\s : |\mathbf{\zeta}|>R } \frac{1}{|\mathbf{\zeta}|^{d \b}}.\] To this end, we may compute the expectation $\E[\Sigma_R^\b(\p)]$ for a translation invariant point process $\p$ on $\R^d$ with unit intensity (i.e., density of points) with respect to Lebesgue measure, and obtain \[\E[\Sigma_R^\b(\p)] = \int_{ \{u \in \R^d:|u|>R\} }  \frac{1}{|u|^{d\b}}\d u.\] The last expression clearly converges to $0$ as $R \to \infty$ at a rate that does not depend on the specific point process $\p$ under consideration (thereby leading to uniform convergence in $\s$ in our setting). This establishes the uniform convergence  $\E[\Sigma_R^\b(\Xi_\s)] \to 0$. By Markov's inequality, this implies the desired uniform convergence in probability.

\section{Concluding remarks} \label{sec:higherD}

The problems studied in this work can, as easily, be posed for dimensions $d \ge 3$. In this work, we limit ourselves mostly to dimensions 2 and 3, motivated principally by their relevance to most commonly studied spatial network models. However, perturbed lattices and their network properties do pose an intriguing mathematical challenge in dimensions $d \ge 3$, particularly through the natural connections (in analogy to 2D and 3D) to energy-minimizing lattices. This brings to mind, for example, the case of dimension $24$, where the famous Leech Lattice has already been shown to exhibit many remarkable properties - both from the point of view of pure mathematics and also in applications to coding theory and analog-to-digital conversion (\cite{borcherds1985leech,cohn2017sphere,conway1986soft,conway2013sphere}). Energy minimizing phenomena for lattices are also known to be relatively well-understood in dimensions 4 and 8 (in addition to dimension 24)  (\cite{SS,betermin-lecture,betermin2019local}).

\section{Proofs of Theorems \ref{thm:SINR} and \ref{result:conv}} \label{sec:proofs}

\subsection{Proof of Theorem \ref{thm:SINR}} \label{sec:thmproof}

We begin by recalling the notation that
\[
	X_n(\sigma) = n + \sigma \gamma_n, (n \in \L \subset \R^d, \sigma > 0).
\]
Let
\[
	Y_n(\sigma):= |\sigma^{-1} X_n(\sigma)|^2 = \left |\frac{n}{\sigma} + \gamma_n \right|^2.
\]
Then $\{Y_n(\sigma)\}$ are independent random variables distributed as follows.
\begin{lem}\label{lem:distribution-of-Y}
$Y_n(\sigma)$ has the following probability density function
\[
	f(t,n, \sigma) = e^{-\frac{|n|^2}{2\sigma^2}}  \cdot t^{\frac{d}{2}-1}e^{-\frac{1}{2}t} \cI_d(\sigma^{-1}|n|\sqrt{t}),\quad (t\ge 0),
\]
where $\cI_d(u)=\frac{1}{2\cdot (2\pi)^{d/2}}
 \int_{\mathbb{S}^{d-1}} e^{u \langle \omega , e_1
 \rangle} \d \omega$, with $e_1$ being the first standard co-ordinate vector in $\R^d$ and $\d \omega$ being the standard spherical measure on $\mathbb{S}^{d-1}$.
\end{lem}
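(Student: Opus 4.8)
The plan is to recognize $Y_n(\sigma)$ as the squared Euclidean norm of a Gaussian vector with shifted mean and to compute its law by passing to polar coordinates. Write $a = a(n,\sigma) := n/\sigma \in \R^d$, so that $Y_n(\sigma) = |a + \eta_n|^2$ with $\eta_n \sim N(\ze, I_d)$; the vector $Z := a + \eta_n$ then has density $(2\pi)^{-d/2} e^{-|a|^2/2} \exp\!\big(-|z|^2/2 + \langle z, a\rangle\big)$ on $\R^d$, obtained by expanding $|z-a|^2 = |z|^2 - 2\langle z,a\rangle + |a|^2$. The independence of the family $\{Y_n(\sigma)\}_{n \in \L}$ is immediate from the independence of $\{\eta_n\}_{n \in \L}$, so it remains only to identify the density of a single $Y_n(\sigma)$.

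First I would introduce polar coordinates $z = r\omega$ with $r \ge 0$ and $\omega \in \mathbb{S}^{d-1}$, under which Lebesgue measure on $\R^d$ becomes $r^{d-1}\, \d r\, \d\omega$ (with $\d\omega$ the unnormalized surface measure on $\mathbb{S}^{d-1}$), giving the joint law of $(r,\omega)$ as $(2\pi)^{-d/2} e^{-|a|^2/2} e^{-r^2/2} e^{r\langle\omega,a\rangle} r^{d-1}\, \d r\, \d\omega$. Next I would perform the substitution $t = r^2$, so that $r^{d-1}\, \d r = \tfrac12 t^{d/2 - 1}\, \d t$, and then integrate out the angular variable $\omega$, which leaves a density of the form $\tfrac12 (2\pi)^{-d/2} e^{-|a|^2/2} t^{d/2-1} e^{-t/2} \big(\int_{\mathbb{S}^{d-1}} e^{\sqrt t\,\langle\omega,a\rangle}\,\d\omega\big)$ for $t \ge 0$.

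The one point requiring a small argument is the evaluation of $\int_{\mathbb{S}^{d-1}} e^{\sqrt t\,\langle \omega, a\rangle}\, \d\omega$. Since $a = |a|\,\hat a$ for a unit vector $\hat a$, rotational invariance of the surface measure lets me replace $\hat a$ by $e_1$, and the reflection $\omega \mapsto -\omega$ (which also preserves $\d\omega$) converts $e^{\sqrt t\,|a|\langle\omega,e_1\rangle}$ into $e^{-\sqrt t\,|a|\langle\omega,e_1\rangle}$; hence this integral equals $2(2\pi)^{d/2} I(\sqrt t\,|a|)$ by the definition of $I$. Substituting back, the prefactors $\tfrac12 (2\pi)^{-d/2}$ and $2(2\pi)^{d/2}$ cancel, and with $|a| = \sigma^{-1}|n|$ (so $|a|^2/2 = |n|^2/(2\sigma^2)$ and $\sqrt t\,|a| = \sigma^{-1}|n|\sqrt t$) one recovers exactly $f(t,n,\sigma) = e^{-|n|^2/(2\sigma^2)} t^{d/2-1} e^{-t/2} I(\sigma^{-1}|n|\sqrt t)$; that the result is automatically a probability density needs no separate check, as it arises as the pushforward of a probability measure under $Z \mapsto |Z|^2$. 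I do not anticipate a genuine obstacle here — the only places to be careful are the Jacobian bookkeeping in the two changes of variables and the invoked rotation/reflection symmetries of the spherical measure.
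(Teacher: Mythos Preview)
Your proposal is correct and follows essentially the same route as the paper: both arguments write $Y_n(\sigma)$ as the squared norm of a shifted Gaussian, pass to polar coordinates, substitute $t=r^2$, and invoke rotational invariance of the spherical measure to reduce the angular integral to $I(\sigma^{-1}|n|\sqrt{t})$. Your explicit use of the reflection $\omega\mapsto -\omega$ to reconcile the sign in the exponent with the definition of $I$ is a detail the paper leaves implicit, but otherwise the two proofs coincide.
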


\begin{proof}[Proof of Lemma \ref{lem:distribution-of-Y}]
For a measurable function $F$, use a change of variables, we can deduce the following
\begin{align*}
\E[F(Y_n(\sigma))] &=\E\bigg[F \Big(\Big|\frac{n}{\sigma} + \gamma_n \Big|^2\Big)\bigg] \\
&= \int_{\R^d} F(|z|^2) \cdot \frac{1}{(2\pi)^{d/2}} e^{-\frac{1}{2}|z-\frac n\sigma|^2} \d z \\
&= \int_0^{\infty} \bigg(  \frac{1}{(2\pi)^{d/2}} \int_{\mathbb{S}^{d-1}} F(r^2)
 e^{- \frac{1}{2}(r^2 + \frac{|n|^2}{\sigma^2} - 2r\frac{|n|}{\sigma}\langle \omega, e_1 \rangle)} \d \omega \bigg) r^{d-1} \d r \\
&= \int_0^{\infty}  F(r^2)  r^{d-2}e^{-\frac{1}{2}\left(r^2 + \frac{|n|^2}{\sigma^2}\right)} \cI_d(\sigma^{-1}|n|r) \cdot 2r \d r \\
&=  e^{-\frac{|n|^2}{2\sigma^2}}
\int_0^{\infty} F(t)  t^{\frac{d}{2}-1}e^{- \frac{1}{2}t} \cI_d(\sigma^{-1}|n|\sqrt{t}) \d t\\
&=\int_0^\infty F(t)f(t,n,\sigma) dt,
\end{align*}
where, in the third step, we have used the rotation invariance of the standard spherical measure.

This allows us to conclude that $f(t,n,\sigma)$ is the probability density function of $Y_n(\sigma)$.
\end{proof}

%

We are now ready to complete the proof of Theorem \ref{thm:SINR}.

\begin{proof}[Proof of Theorem \ref{thm:SINR}]
(i)
In what follows, for simplicity, we will suppress from the notation the dependence on $\sigma$.
Since $Y_n(\sigma) = |X_n(\sigma)/\sigma|^2$, it is clear that
\[
p_c(\theta, \beta, \sigma) = \E\bigg[\prod_{j \in \La_B} \Big(1 + \theta\frac{|X_B|^{d\beta}}{|X_j|^{d\beta}} \Big)^{-1}\bigg]
=\E\bigg[\prod_{j \in \La_B} \Big(1 + \theta \frac{Y_B^{d\beta/2}}{Y_j^{d\beta/2}} \Big)^{-1}\bigg]. 
\]
Then it is straightforward to express the coverage probability as
 follows:
\begin{align}
p_c(\theta, \beta, \sigma)&= \sum_{n \in \L} \E\bigg[\prod_{j \in \La_n} \Big(1 + \theta
 \frac{Y_n^{d\beta/2}}{Y_j^{d\beta/2}} \Big)^{-1}; B=n\bigg]\nonumber\\
&=\sum_{n \in \L} \int_0^\infty  \E\bigg[\prod_{j \in \La_n} \Big(1 +
 \theta \frac{t^{d\beta/2}}{Y_j^{d\beta/2}} \Big)^{-1}; Y_j
 \ge t \ (\forall j \not=n) \bigg]
 f(t,n,\sigma) \d t \nonumber\\
&= \sum_{n \in \L} \int_0^\infty \prod_{j \in \La_n} \E\bigg[ \Big(1 +
 \theta \frac{t^{d\beta/2}}{Y_j^{d\beta/2}} \Big)^{-1}; Y_j
 \ge t \bigg]
 f(t,n,\sigma) \d t \label{coverage-for-Y}.
\end{align}
We complete the proof of (i) by noting that
\[
\E\bigg[ \Big(1 + \theta \frac{t^{d\beta/2}}{Y_j^{d\beta/2}} \Big)^{-1}; Y_j \ge t\bigg]
=\int_{t }^{\infty}  \Big(1 + \th \frac{t^{d\beta/2}}{u^{d\beta/2}} \Big)^{-1}f(u, j, \sigma) \d u.
\]

(ii) Making a change of variables $t=\theta^{-2/d\beta} s$ in the integral~\eqref{coverage-for-Y}, we have
\begin{align*}
p_c(\theta, \beta, \sigma) &= \sum_{n \in \L}  \int_0^\infty \prod_{j \in \La_n} \E\bigg[ \Big(1 + \frac{s^{d\beta/2}}{Y_j^{d\beta/2}} \Big)^{-1}; Y_j \ge \theta^{-2/d\beta}s\bigg] f(\theta^{-2/d\beta}s ,n,\sigma) \theta^{-2/d\beta} \d s.
\end{align*}
From Lemma \ref{lem:distribution-of-Y} and \eqref{eq:generalI}, we deduce that 
\begin{align}
\theta^{1/\beta} p_c(\theta, \beta, \sigma) 
&= \sum_{n \in \L} e^{-\frac{|n|^2}{2\sigma^2}} 
\int_0^\infty \prod_{j \in \La_n}
 \E\bigg[ \Big(1 +
 \frac{s^{d\beta/2}}{Y_j^{d\beta/2}} \Big)^{-1}; Y_j \ge
 \theta^{-2/d\beta}s\bigg] \nonumber \\
&\quad \times s^{\frac{d}{2}-1}
 \exp\left(-\frac{1}{2}\theta^{-2/d\beta}s\right)
 \cI_d(\sigma^{-1}|n|\sqrt{\theta^{-2/d\beta}s})
\d s \nonumber \\
&\to \sum_{n \in \L} e^{-\frac{|n|^2}{2\sigma^2}} 
\int_0^\infty \prod_{j \in \La_n}
 \E\bigg[ \Big(1 +
 \frac{s^{d\beta/2}}{Y_j^{d\beta/2}} \Big)^{-1}\bigg]
 \frac{s^{\frac{d}{2}-1}}{2^{d/2}\Gamma(d/2)} \d s 
\label{eq:larget}
\end{align}
as $\theta \to \infty$. We will justify the interchange of
 integration and limit in Subsection~\ref{sec:justify}. 

Recalling that the probability density of $Y_j$ is given by $f(\cdot,j,\sigma)$, we obtain the desired result.
\end{proof}

\subsubsection{Justification of the interchange of integration and
limit}\label{sec:justify}
We want to justify the interchange of integration and
limit in \eqref{eq:larget}. 
Let 
\[
f_{\delta}(n,s) 
:= \prod_{j \in \La_n}
 \E\bigg[ \Big(1 +
 \frac{s^{d\beta/2}}{Y_j^{d\beta/2}} \Big)^{-1}; Y_j \ge
 \delta s\bigg] s^{\frac{d}{2}-1}
 \exp\left(-\frac{1}{2}\delta s\right)
 \cI_d(\sigma^{-1}|n|\sqrt{\delta s})
\]
and 
\[
\int f(n,s) \mu(dn ds) := \sum_{n \in \La} e^{-\frac{|n|^2}{2\sigma^2}}
\int_0^{\infty} f(n,s) ds. 
\]
It is easy to see that 
\[
 f_{\delta}(n,s) \to \prod_{j \in \La_n} \E\bigg[ \Big(1 +
 \frac{s^{d\beta/2}}{Y_j^{d\beta/2}} \Big)^{-1}\bigg]
 \frac{s^{\frac{d}{2}-1}}{2^{d/2}\Gamma(d/2)}. 
\]
By the dominated convergence theorem, it suffices to show that for some $\delta_0>0$ 
\[
\int \sup_{0 < \delta < \delta_0} |f_{\delta}(n,s)| \mu(dn
ds) < \infty. 
\]
\begin{lem} Let $d \ge 2$. There exist $C_1, C_2 > 0$ such
 that 
\begin{align*}
\prod_{j \in \La_n}
 \E\bigg[ \Big(1 + \frac{s^{d\beta/2}}{Y_j^{d\beta/2}} \Big)^{-1} \bigg] 
\le C_1 e^{-C_2 s} \quad \text{for $s \ge 0$}. 
\end{align*}
Here $C_1$ and $C_2$ may depend on $\sigma, d, \beta$ but not $n$. 
\end{lem}
\begin{proof}
We recall that $Y_j = |j/\sigma + \eta|^2$ where $\eta$ is a $d$-dimensional
 standard Gaussian random variables. Then, $\E Y_j = |j/\sigma|^2 + d$. 
For $0 < c_1 < c_2$, let 
\[
 \La_{n,c_1,c_2} := \{{j \in \La_n : c_1 s \le \E Y_j \le c_2 s}\}
\]
and 
for $a>0$, which is chosen later, let 
\[
 B_a
= \bigcap_{j \in \La_{n,c_1,c_2}} \{Y_j \le a \E Y_j\}.  
\]
On $B_a$, since $Y_j \le a\E Y_j \le ac_2 s$ for $j \in \La_{n,c_1,c_2}$, we have 
\begin{align}
\E\bigg[\prod_{j \in \La_{n,c_1,c_2}} 
\Big(1 + \frac{s^{d\beta/2}}{Y_j^{d\beta/2}} \Big)^{-1} ; B_a\bigg] 
&\le \E\bigg[\prod_{j \in \La_{n,c_1,c_2}} 
\Big(1 + (ac_2)^{-d\beta/2} \Big)^{-1} ; B_a\bigg] \nonumber\\  
&\le e^{-\alpha c_3 s^{d/2}}. 
\label{eq:ba}
\end{align}
where $\alpha = \log (1+(ac_2)^{-d\beta/2}) > 0$ and 
we used the fact that $|\La_{n,c_1,c_2}| \ge c_3 s^{d/2}$ 
for any sufficiently large $s$. On $B_a^{\complement}$, for any sufficiently large $s$, 
\begin{align*}
 \E\bigg[\prod_{j \in \La_{n,c_1,c_2}} 
\Big(1 + \frac{s^{d\beta/2}}{Y_j^{d\beta/2}} \Big)^{-1} ; B_a^{\complement} \bigg] 
&\le \P( B_a^{\complement} ) 
\le \sum_{j \in \La_{n,c_1,c_2}} P(Y_j > a c_1s). 
\end{align*}
Since $|j/\sigma|^2 \le c_2 s$ for $j\in \La_{n,c_1,c_2}$, we see that 
\begin{align*}
a c_1s < Y_j = |j/\sigma + \eta|^2 \le 2(|j/\sigma|^2 + |\eta|^2) 
\le 2(c_2s + |\eta|^2) 
\end{align*}
and, by taking $a = 2(c_2 +1)/c_1$, we have 
\begin{align*}
P(Y_j> a c_1s) 
&\le P\left( 
\frac{1}{2} (a c_1 - 2 c_2) s \le
 |\eta|^2 
\right) \\
&= P(|\eta|^2 > s) \\
& \sim \frac{1}{\Gamma(d/2)} (\frac{s}{2})^{d/2-1}e^{-s/2}
 \quad \text{as $s \to \infty$}, 
\end{align*}
where $|\eta|^2$ is the $\chi^2_d$-distribution with $d$
 degrees of freedom (see \eqref{eq:chi-squared}). 
Thus, 
\begin{align}
\E\bigg[\prod_{j \in \La_{n,c_1,c_2}} 
\Big(1 + \frac{s^{d\beta/2}}{Y_j^{d\beta/2}} \Big)^{-1} ; B_a^{\complement} \bigg] 
&\le c_5 s^{d-1}e^{-s/2}.  
\label{eq:bac}
\end{align}
Therefore, \eqref{eq:ba} with $d \ge 2$ and \eqref{eq:bac} yield 
\begin{align*}
\prod_{j \in \La_n} 
\E\bigg[ \Big(1 + \frac{s^{d\beta/2}}{Y_j^{d\beta/2}} \Big)^{-1} \bigg] 
&\le \E\bigg[\prod_{j \in \La_{n,c_1,c_2}}
 \Big(1 + \frac{s^{d\beta/2}}{Y_j^{d\beta/2}} \Big)^{-1} \bigg] \\
&\le C_1\exp(-C_2 s).  
\end{align*}
This completes the proof. 
\end{proof}
Since $\cI_d(u) \le
\cI_d(0) e^u$ 
for every $u \ge 0$, we have 
\begin{align}
|f_{\delta}(n,s)|
&= 
\prod_{j \in \La_n}
 \E\bigg[ \Big(1 + \frac{s^{d\beta/2}}{Y_j^{d\beta/2}} \Big)^{-1}\bigg] s^{\frac{d}{2}-1}
 e^{-\delta s/2}\cI_d(\sigma^{-1}|n|\sqrt{\delta s})
 \nonumber \\
&\le C_1 e^{-C_2s} \cdot s^{\frac{d}{2}-1} \cI_d(0) e^{b\sqrt{\delta s}} 
\le C e^{-cs} e^{b\sqrt{\delta s}}, 
\label{eq:integraloffdelta}
\end{align}
where $b = |n|/\sigma$. 
From \eqref{eq:integraloffdelta}, by taking $\delta_0$ small enough, we have 
\begin{align*}
\int_0^{\infty} \sup_{0 < \delta < \delta_0} 
|f_{\delta}(n,s)| ds
&\le C \int_0^{\infty} e^{-c s} e^{b \sqrt{\delta_0 s}} ds\\
&= C\left\{
\frac{1}{c} + \frac{b \sqrt{\delta_0} e^{\frac{\delta_0}{4c}
b^2} \sqrt{\pi}}{2c^{3/2}}
\left(1+\mathrm{Erf}\Big( \frac{b\sqrt{\delta_0}}{2\sqrt{c}}
 \Big) \right)
\right\}\\
& \le C' e^{\frac{b^2}{4}} 
= C' e^{\frac{|n|^2}{4\sigma^2}}, 
\end{align*}
where $\mathrm{Erf}(z) = \frac{2}{\sqrt{\pi}}\int_0^z
e^{-t^2}dt$. 
Hence 
\[
 \int \sup_{0< \delta < \delta_0} |f_{\delta}(n,s)|\mu(dn ds) 
\le C' \sum_{n \in \La} e^{-\frac{|n|^2}{4\sigma^2}} < \infty, 
\]
which justifies that the order of limit and integration can be interchanged. 

\subsection{Proof of Theorem~\ref{result:conv}} \label{sec:thmproof71}
\begin{proof}[Proof of Theorem~ \ref{result:conv}]
We use the following expression, which holds for any point process $\La$: \[p_c(\t,\La)= \E_\La \left[ \prod_{\la \in \La_\B} \left(1 + \theta \cdot  \frac{|X_\B|^{d\b}}{|X_\la|^{d\b}} \right)^{-1} \right]  \]
Let $B(0;R)$ denote the ball of radius $R$ in $\R^d$. Fix such an $R>0$, to be thought of as large. Consider a compactly supported smooth radial function $\phi:\R^d \to [0,1]$, such that $\phi(x)=1$ if $x \in B(0;R)$, 
and $\phi(x)$ decreases to 0 as $|x| \to \infty$. Consider
 the following functional $p_c(\t,\La,R)$ that is associated
 with $p_c(\t,\La)$ \[p_c(\t,\La,R)= \E_\La \left[ \prod_{\la \in \La_\B} \left(1 + \theta \cdot \frac{|X_\B \phi(X_\la)|^{d\b}}{|X_\la|^{d\b}} \right)^{-1} \right] .\]

Observe that the functional $\prod_{\la \in \La_\B} \left(1 + \theta \cdot \frac{|X_\B \phi(X_\la)|^{d\b}}{|X_\la|^{d\b}} \right)^{-1}$ is continuous on the space of point configurations on $\R^d$, where the latter is endowed with the \textit{vague topology} on the space of locally finite point configurations on $\R^d$, entailing convergence as discrete measures on compact sets. Furthermore, this functional is bounded by 1. 
These two facts, together with convergence in distribution of $\Xi_\s$ to $\Xi$ 
implies that, 
for any fixed $R>0$, we have $p_c(\t,\Xi_\s,R) \to p_c(\t,\Xi,R)$ as $\s \to \infty$. Now, let $\La \in \{\Xi, \{\Xi_\s\}_{\s>0} \}$. Let $\La_{B}^R$ denote $\La \cap B(0;R)$ and $\La_B^{R^\complement}$ denote 
$\La \cap B(0;R)^\complement$.
 
Let $g_{\t}(x) = (1+\theta |x|^{d\beta})^{-1}$. Then we have,
\begin{align*}
\lefteqn{|p_c(\t,\La)-p_c(\t,\La,R)|} \\
&= \left| \E_\La \left[ \prod_{\la \in \La_\B^R} g_{\t}\Big(\frac{|X_\B|}{|X_\la|}\Big) \left( \prod_{\la \in \La_\B^{R^\complement}} 
g_{\t}\Big(\frac{|X_\B|}{|X_\la|}\Big)   
-\prod_{\la \in \La_\B^{R^\complement}} g_{\t}\Big(\frac{|X_\B \phi(X_{\la})|}{|X_\la|}\Big) \right) 
 \right] \right| \\
&\le \E_\La \left[ \Big| \prod_{\la \in \La_\B^{R^\complement}} g_{\t}\Big(\frac{|X_\B|}{|X_\la|}\Big) 
 - \prod_{\la \in \La_\B^{R^\complement}} 
g_{\t}\Big(\frac{|X_\B \phi(X_{\la})|}{|X_\la|}\Big) \Big| \right] \\
&\le \E_\La \left[ \Big| \prod_{\la \in \La_\B^{R^\complement}} 
g_{\t}\Big(\frac{|X_\B|}{|X_\la|}\Big) -1 \Big| \right] + 
\E_\La \left[ \Big| \prod_{\la \in \La_\B^{R^\complement}} 
g_{\t}\Big(\frac{|X_\B \phi(X_{\la})|}{|X_\la|}\Big) -1 \Big| \right] \\
\numberthis \label{eq:conv-diff}
 \end{align*} 
Now we observe that 
\[ 
\prod_{\la \in \La_\B^{R^\complement}} 
g_{\t}\Big(\frac{|X_\B|}{|X_\la|}\Big)^{-1} 
\le 
 \prod_{\la \in \La_\B^{R^\complement}} \exp\left(\t \frac{|X_\B|^{d\b}}{|X_\la|^{d\b}}  \right) = \exp \left( \t |X_\B|^{d\b}\Sigma_R^\b(\La) \right). 
\] 
Since $|\phi(X_\la)|\le 1$, we may also deduce that
 \[  
\prod_{\la \in \La_\B^{R^\complement}} 
g_{\t}\Big(\frac{|X_\B \phi(X_{\la})|}{|X_\la|}\Big)^{-1} 
\le \exp \left( \t |X_\B|^{d\b}\Sigma_R^\b(\La) \right).      
\]
This implies that 
\begin{equation} \label{eq:conv-min} 
\min\left\{ \prod_{\la \in \La_\B^{R^\complement}} 
g_{\t}\Big(\frac{|X_\B|}{|X_\la|}\Big),  
 \prod_{\la \in \La_\B^{R^\complement}} 
g_{\t}\Big(\frac{|X_\B \phi(X_{\la})|}{|X_\la|}\Big) 
\right\} 
\ge \exp \left( -\t |X_\B|^{d\b}\Sigma_R^\b(\La) \right). 
 \end{equation}
 On the other hand,
 \begin{equation} \label{eq:conv-max} 
\max\left\{ \prod_{\la \in \La_\B^{R^\complement}} 
g_{\t}\Big(\frac{|X_\B|}{|X_\la|}\Big),  
 \prod_{\la \in \La_\B^{R^\complement}} 
g_{\t}\Big(\frac{|X_\B \phi(X_{\la})|}{|X_\la|}\Big) 
\right\} 
\le 1. 
 \end{equation}
As a consequence of \eqref{eq:conv-min} and \eqref{eq:conv-max}, 
we deduce that 
\begin{align*}
&\E_\La \Big[ \Big| 
\prod_{\la \in \La_\B^{R^\complement}} g_{\t}\Big(\frac{|X_\B|}{|X_\la|}\Big) -1 
\Big| \Big] 
+ \E_\La \Big[ \Big| \prod_{\la \in \La_\B^{R^\complement}} 
g_{\t}\Big(\frac{|X_\B \phi(X_{\la})|}{|X_\la|}\Big) -1 \Big| \Big] \\
&\quad \le 2\E_\La \Big[ 1 - \exp \big( -\t |X_\B|^{d\b}\Sigma_R^\b(\La) \big) 
\Big],  
\end{align*}
which converges to 0 as $R \to \infty$ uniformly over $\La \in \{\Xi, \{\Xi_\s\}_{\s>0}\}$ because of the uniform convergence  $\Sigma_R^\b(\La) \to 0$ (as $R \to \infty$). 

In light of \eqref{eq:conv-diff}, this implies that we can choose $R$ and $\s$ (depending on $R$) large enough such that
 \begin{align*}
\lefteqn{|p_c(\t,\Xi_\s)-p_c(\t,\Xi)|} \\ 
&\le |p_c(\t,\Xi_\s)-p_c(\t,\Xi_\s,R)| 
+ |p_c(\t,\Xi_\s,R)-p_c(\t,\Xi,R)| 
+ |p_c(\t,\Xi,R)-p_c(\t,\Xi)|  
\end{align*} 
 can be made arbitrarily small. Thus, as $\s \to \infty$, we have the convergence of coverage probabilities $p_c(\t,\Xi_\s) \to p_c(\t,\Xi)$.  This completes the proof of our result.
\end{proof}

\section{Acknowledgements}
We would like to thank Joel L. Lebowitz and Khanh Duy Trinh for illuminating discussions regarding this work, and the anonymous referees for their insightful comments and suggestions. SG was supported in part by the MOE grants R-146-000-250-133, R-146-000-312-114 and MOE-T2EP20121-0013. 
NM was supported in part by the Grant-in-Aid for Scientific Research (C) (No.19K11838) 
of the Japan Society for the Promotion of Science (JSPS).
TS was supported in part by the Grant-in-Aid for Scientific Research (B) (No.18H01124) and (S) (No.16H06338) of Japan Society for the Promotion of Science (JSPS) and by 
JST CREST Mathematics (15656429). 

\appendix 

\begin{center}
{\Large   \textbf{Appendix}}
\end{center} 
 
\section{Appendix : Persistent homology and persistence diagrams} \label{a:PD}
In this appendix, we briefly recall the definition of
persistent homology and persistence diagram. We refer the
readers to
(cf. \cite{zomorodian2005computing,edelsbrunner2008persistent,weinberger2011persistent,
HST18}) and references therein for more details. 

Let $V$ be a finite set and we simply write $V=\{1,2,\dots,n\}$. 
An abstract simplicial complex over $V$ is a 
collection $K$ of subsets which is closed under the operation of taking
non-empty subsets, i.e., if $\sigma \in K$ and
$(\emptyset \not=)\eta \subset \sigma$, then $\eta \in K$. 
An element $\sigma \in K$ is called a $q$-simplex if the
cardinality of $|\sigma|=q+1$. A $0$-simplex is a vertex, 
$1$-simplex is an edge, 
$2$-simplex is a face, and so on. 
We denote the set of $q$-simplices by $K_q$. 
We introduce an equivalence relation on ordered simplices by 
$(i_0, \dots, i_q) \sim (j_0, \dots, j_q)$ if there exists
an even permutation $\pi \in \mathcal{S}_{q+1}$ such that $j_k =
\pi(i_k)$ for every $k=0,1,\dots,q$. There are two
equivalence classes, which are called orientations. 
When $\sigma = \{i_0, \dots, i_q\}$, we denote by $\bra
\sigma \ket$ the equivalence class to which the ordered
simplex $(i_0, \dots, i_q)$ belongs. 
When $(i_0, \dots, i_q)$ and $(j_0, \dots, j_q)$ have
different orientations, we put minus sign as 
$\bra i_0, \dots, i_q \ket = - \bra j_0, \dots, j_q \ket$. 
For example, $\bra 1, 3, 2\ket = -\bra 1,2,3 \ket$, 
$\bra 1, 4, 3, 2 \ket = - \bra 1,2,3,4\ket$. 
Let $\F$ be a field and we define the $q$-chain group by
\[
 C_q(K) := \{\sum_{\sigma \in K_q} a_{\sigma} \bra \sigma
 \ket : a_{\sigma} \in \F\}, 
\]
which is an $\F$-vector space having a basis $\{\bra \sigma
\ket : \sigma \in K_q\}$.  
Let $\partial_q : C_q(K) \to C_{q-1}(K)$ be a linear map
(boundary map) defined by
\[
 \partial_q \bra i_0,i_1,\dots,i_q \ket
= \sum_{k=0}^q (-1)^q \bra i_0,\dots,\widehat{i_k} \dots, i_q \ket 
\]
where $\widehat{i_k}$ means the removal of $i_k$, and it is
extended linearly for general elements of $C_q(K)$ by 
$\partial_q(\sum_{\sigma \in K_q} a_{\sigma} \bra \sigma
 \ket)
= \sum_{\sigma \in K_q} a_{\sigma} \partial_q(\bra \sigma
 \ket)$. 
It is easy to see that $\partial_{q} \circ \partial_{q+1} = 0$ for
every $q$. 
The chain groups and boundary maps are assembled into a
chain complex: 
\[
\cdots \to  C_q(K) \stackrel{\partial_q}{\to} C_{q-1}(K)
\stackrel{\partial_{q-1}}{\to} 
\cdots \stackrel{\partial_2}{\to} C_1(K)
\stackrel{\partial_1}{\to} C_0(K) 
\stackrel{\partial_0}{\to} 0 
\]
From the relation $\partial_{q} \circ \partial_{q+1}$, we have 
$\im \partial_{q+1} \subset \ker \partial_{q}$. Then one can
define the $q$th-homology group by 
\[
 H_q(K) := \ker \partial_q / \im \partial_{q+1}, 
\]
which is the quotient $\F$-vector space. Each generator of
$H_q(K)$ represents a $q$-dimensional hole. 

A filtration $\K = \{K_t\}_{t=0}^n$ of simplicial complexes is an incresing family 
of simplicial complexes $K_0 \subset K_1 \subset \cdots
\subset K_n$. For each simplical complex $K_t$, the homology
group $H_q(K_t)$ is defined as above. So a family of
homology groups $\{H_q(K_t)\}_{t=0}^n$ has an information
about how $q$-dimensional holes change as $t$ varies. 
But, one can observe more information, persistence, by
employing persistent homology. Since we have a filtration,
we have an inclusion $\iota_t : K_t \to K_{t+1}$, which
induces the linear maps $(\iota_t)_*$ on homology groups
$\{H_q(K_t)\}_{t=0}^n$: 
\begin{equation}
H_q(K_0) \stackrel{(\iota_0)_*}{\to} \cdots 
\stackrel{(\iota_{t-1})_*}{\to} 
H_q(K_t) \stackrel{(\iota_t)*}{\to} H_q(K_{t+1})
\stackrel{(\iota_{t+1})_*}{\to} \cdots 
\stackrel{(\iota_{T-1})*}{\to} H_q(K_n), 
\label{eq:persistent_module} 
\end{equation}
which is called a \textit{persistence module}, denoted by $H_q(\K)$.  
A collection of vector spaces
$\{V_t\}_{t=0}^n$ and linear maps $f_{st} : V_s \to V_t$
such that $f_{st} \circ f_{tu} = f_{su}$ for $s<t<u$ 
is called a representation of $A_n$-quiver. 
In the persistence module above, $f_{st} = (\iota_t)\circ
\cdots \circ (\iota_s)_*$ and $V_t = H_q(K_t)$. 
A basic representation of $A_n$-quiver is an \textit{interval
module} $I(b,d)$ defined by 
\[
0 \stackrel{g_0}{\to} \cdots \to 0 \stackrel{g_{b-1}}{\to} \F \stackrel{g_b}{\to} \F \stackrel{g_{b+1}}{\to} \cdots \to \F \stackrel{g_{d-1}}{\to} 0 \to
\cdots \stackrel{g_n}{\to} 0,  
\]
where $g_t = \mathrm{id}_{\F}$ is the identity map on $\F$ for $t=b,b+1,\dots, d-2$. 
From the theory of representations of $A_n$-quiver, we have the following decomposition property. 
\begin{thm}(\cite{zomorodian2005computing}) \label{thm:PH}
Let $H_q(\K)$ be a persistence module of the form
 \eqref{eq:persistent_module}.  
There uniquely exist indices $p \in \Z_{\ge 0}$ and
 $b_i, d_i \in \{0,1,\dots, n\}$ with $0 \le b_i < d_i \le n, i = 1, 2, \dots,
 p$ such that the following isomorphism holds:
\begin{equation}
 H_q(\K) \simeq \bigoplus_{i=1}^p I(b_i,d_i)
\label{eq:PH}
\end{equation}
\end{thm}
We can interpret the interval module $I(b_i,d_i)$ 
as persistence of the $i$th generator of $q$th homology group, which appears at
time $b_i$ and disappears at time $d_i$. 
In order to visualize the information \eqref{eq:PH} of the decomposition
of $H_q(\K)$, 
we denote its persistence diagram by 
\[
D_q(\K) := \{(b_i,d_i) : i=1,2\dots, p\} \subset \Delta, 
\]
which is a multiset in $\Delta:=\{(x,y) \in \R^2 :0 \le  x <
y \le \infty\}$. See Fig.~\ref{fig:PD2}. 

\begin{ex}
We consider an example of a filtration given in
 Fig.~\ref{fig:filtration2}. 
We can observe that 
there is a cycle ($1$-dimensional hole) at time $1$, 
two independent cycles at time $2$, and 
a cycle at time $3$. It is also considered that 
a cycle appears at time $1$, another cycle appears at time
 $t=2$, one cycle disappears at time $3$ and $4$. This
 reflects on the persistent module (over a field $\R$) as 
\begin{align*}
H_1(\K) &:=  H_1(K_0) \to H_1(K_1) \to  H_1(K_2) \to H_1(K_3) \to
 H_1(K_4) \\
& \simeq 0 \to \R \to \R^2 \to \R \to 0. 
\end{align*}
In this case, we can see that 
\[
 H_1(\K) \simeq I(1,4) \oplus I(2,3) 
\simeq (0 \to \R \to \R \to \R \to 0) 
\oplus (0 \to 0 \to \R \to 0 \to 0), 
\]
although we have another possibility of decomposition by
 interval modules as 
\[
 H_1(\K) \simeq I(1,3) \oplus I(2,4) 
\simeq (0 \to \R \to \R \to 0\to 0) 
\oplus (0 \to 0 \to \R \to \R \to 0). 
\]
One cannot observe this distinction of decompositions if 
one only has the homology groups $\{H_q(K_t)\}_{t=0}^n$. This is
 a consequence of induced linear maps
 $\{(i_t)_*\}_{t=0}^n$ which comes from the inclusions of
 the filtration $\K$. So the persistent homology has more
 information than the collection of homology groups do. 
\begin{figure}[htbp]
\begin{center}
\includegraphics[width=0.8\hsize]{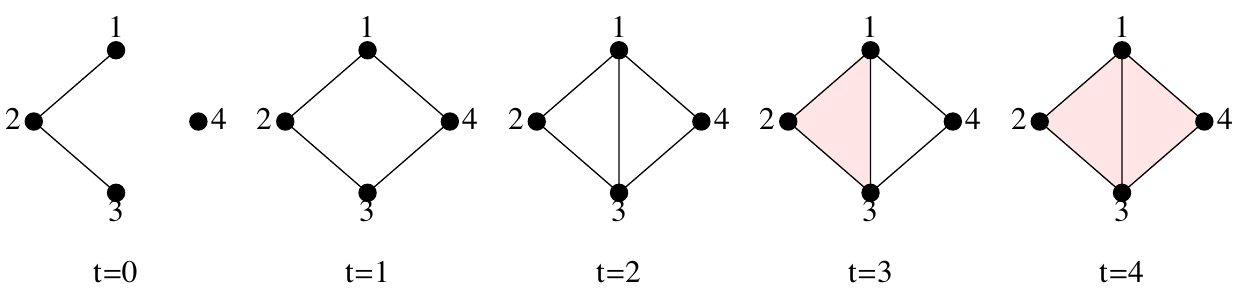}
\end{center}
\caption{Filtration. $\{1,2,3\}$ and $\{1,3,4\}$ are
 $2$-simplices and colored in pink at time $3$ and $4$.}
\label{fig:filtration2}
\end{figure}
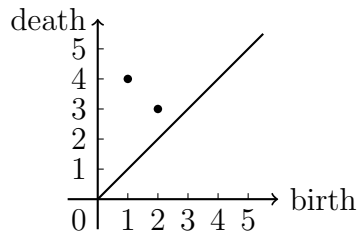
\begin{figure}
\begin{center}
\begin{tikzpicture}[scale =0.4]
\draw[thick, ->] (-1,0) -- (6,0) node [right] {$\mathrm{birth}$};
\draw[thick, ->] (0,-1) -- (0,6) node [left] {$\mathrm{death}$}; 
\draw[thick, -] (0,0) -- (5.5,5.5);  
\fill (1,4) circle (4pt); 
\fill (2,3) circle (4pt); 
\node at (0,0) [anchor=north east] {0};
\foreach \x in {1,2,3,4,5}
   \draw (\x cm,.2) -- (\x cm,0) node[anchor=north] {$\x$};
\foreach \y in {1,2,3,4,5}
   \draw (.2,\y cm) -- (0,\y cm) node[anchor=east] {$\y$};
\end{tikzpicture}
\end{center} 
\caption{Persistence diagram of $H_1(\K)$ for the filtration given in Fig.~\ref{fig:filtration2}.}
\label{fig:PD2}
\end{figure}
We remark that here in this example, we treat index $t$ as time, but it is
 sometimes treated as resolution or other parameters,
 depending on what we are looking at. In the \v{Cech}
 filtration defined below, the radius $r$ is such a
 parameter. 
\end{ex}

Given $\Phi = \{x_1,x_2, \dots,x_N\} \subset \R^d$ and $r
>0$, we define a \v{C}ech complex built over $\Phi$ by 
\[
 K(\Phi, r) := \{\sigma \subset \Phi : \bigcap_{x \in \sigma}
 B(x,r) \not= \emptyset\}, 
\]
where $B(x,r)$ is the ball of radius $r$ centered at $x$. 
A collection $\K(\Phi) := \{K(\Phi,r)\}_{r \ge 0}$ is an increasing
family of \v{C}ech complexes, which we call \v{C}ech
filtration over $\Phi$. 
From the \v{C}ech filtration, we have the persistent homology
$H_q(\K(\Phi))$ by Theorem~\ref{thm:PH} and its persistence
diagram. 
Here we deal with a filtration with continuous parameter $r$, but
for a finite configuration $\Phi$, the birth of simplices
occurs only at discrete finite parameters $r_1 < r_2 <
\cdots < r_n$ so that 
we can apply Theorem~\ref{thm:PH} to this case and define
its persistence diagrams supported on $\{r_1,\dots, r_n\}
\cap \Delta$ in an obvious way. 

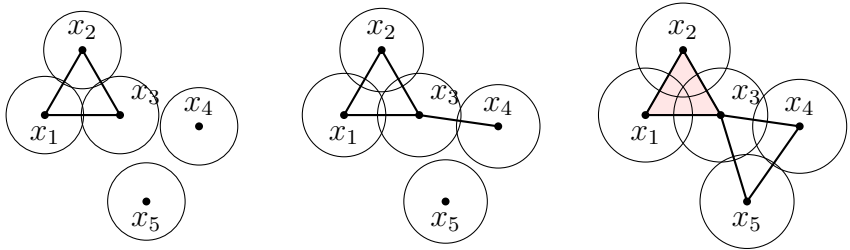
\begin{figure}[!h]
\begin{center}
\begin{minipage}{3.8cm}
\begin{tikzpicture}[scale = 0.5]
\coordinate (1) at (0,0);
\coordinate (2) at (1, 1.73);
\coordinate (3) at (2,0);
\coordinate (4) at (4.1, -0.3);
\coordinate (5) at (2.7, -2.3);
\foreach \n in {1, ..., 5}
\fill (\n) circle (3pt); 
\foreach \n in {1, ..., 5}
\draw (\n) circle [radius = 1.03cm]; 
\draw[thick] (1) -- (2);  
\draw[thick] (1) -- (3);  
\draw[thick] (2) -- (3);  
\node [below] at (1) {$x_1$}; 
\node [above] at (2) {$x_2$}; 
\node [above right] at (3) {$x_3$}; 
\node [above] at (4) {$x_4$}; 
\node [below] at (5) {$x_5$}; 
\end{tikzpicture}
\end{minipage}
\begin{minipage}{3.8cm}
\begin{tikzpicture}[scale = 0.5]
\coordinate (1) at (0,0);
\coordinate (2) at (1, 1.73);
\coordinate (3) at (2,0);
\coordinate (4) at (4.1, -0.3);
\coordinate (5) at (2.7, -2.3);
\foreach \n in {1, ..., 5}
\fill (\n) circle (3pt); 
\foreach \n in {1, ..., 5}
\draw (\n) circle [radius = 1.11cm]; 
\draw[thick] (1) -- (2);  
\draw[thick] (1) -- (3);  
\draw[thick] (2) -- (3);  
\draw[thick] (3) -- (4);  
\node [below] at (1) {$x_1$}; 
\node [above] at (2) {$x_2$}; 
\node [above right] at (3) {$x_3$}; 
\node [above] at (4) {$x_4$}; 
\node [below] at (5) {$x_5$}; 
\end{tikzpicture}
\end{minipage}
\begin{minipage}{3.8cm}
\begin{tikzpicture}[scale = 0.5]
\coordinate (1) at (0,0);
\coordinate (2) at (1, 1.73);
\coordinate (3) at (2,0);
\coordinate (4) at (4.1, -0.3);
\coordinate (5) at (2.7, -2.3);
\filldraw [fill=pink!40] (1) -- (2) -- (3); 
\foreach \n in {1, ..., 5}
\fill (\n) circle (3pt); 
\foreach \n in {1, ..., 5}
\draw (\n) circle [radius = 1.25cm]; 
\draw[thick] (1) -- (2);  
\draw[thick] (1) -- (3);  
\draw[thick] (2) -- (3);  
\draw[thick] (3) -- (4);  
\draw[thick] (3) -- (5);  
\draw[thick] (4) -- (5);  
\node [below] at (1) {$x_1$}; 
\node [above] at (2) {$x_2$}; 
\node [above right] at (3) {$x_3$}; 
\node [above] at (4) {$x_4$}; 
\node [below] at (5) {$x_5$}; 
\end{tikzpicture}
\end{minipage}
\end{center} 
\caption{\v{C}ech complexes $K(\Phi, r)$ over
 $\Phi=\{x_1,x_2,\dots,x_5\}$ as $r$ increases. }
\end{figure}

\newpage

\section{Appendix : Comparison of nnd-s of 2D \& 3D point sets} \label{a:nnd}
\def\size3{0.5}
\begin{figure}[hbtp]
\begin{minipage}{0.52\hsize}
\begin{center}
\includegraphics[scale=\size3]{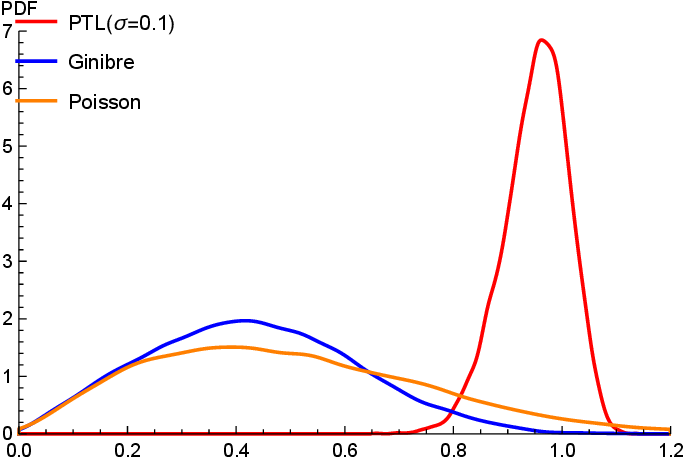}
\end{center}
\end{minipage}
\begin{minipage}{0.48\hsize}
\begin{center}
\includegraphics[scale=\size3]{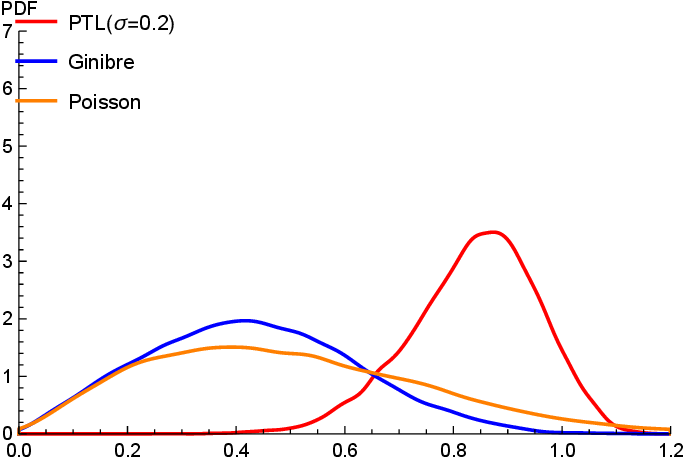}
\end{center}
\end{minipage}\\[4mm]
\begin{minipage}{0.52\hsize}
\begin{center}
\includegraphics[scale=\size3]{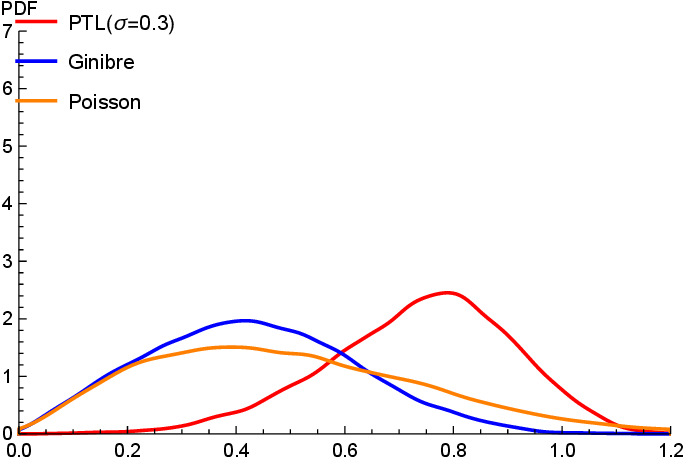}
\end{center}
\end{minipage}
\begin{minipage}{0.48\hsize}
\begin{center}
\includegraphics[scale=\size3]{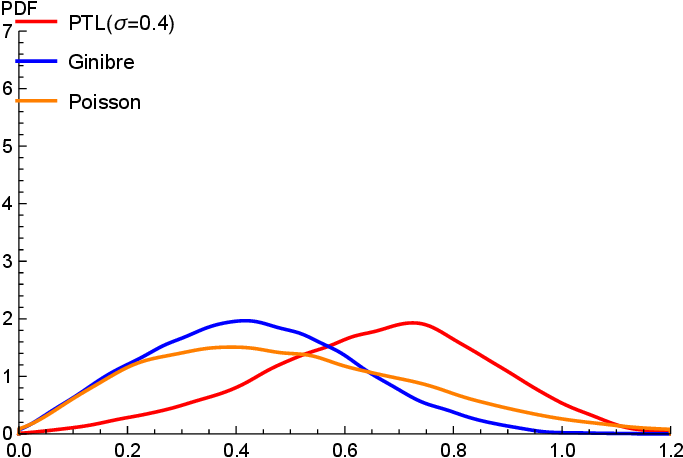}
\end{center}
\end{minipage}\\[4mm]
\begin{minipage}{0.52\hsize}
\begin{center}
\includegraphics[scale=\size3]{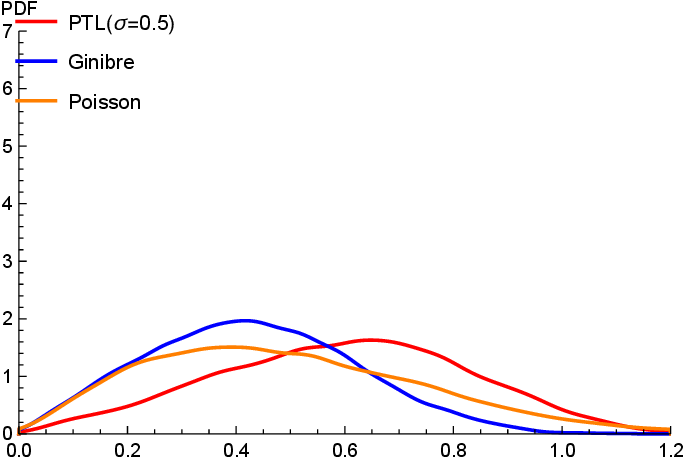}
\end{center}
\end{minipage}
\begin{minipage}{0.48\hsize}
\begin{center}
\includegraphics[scale=\size3]{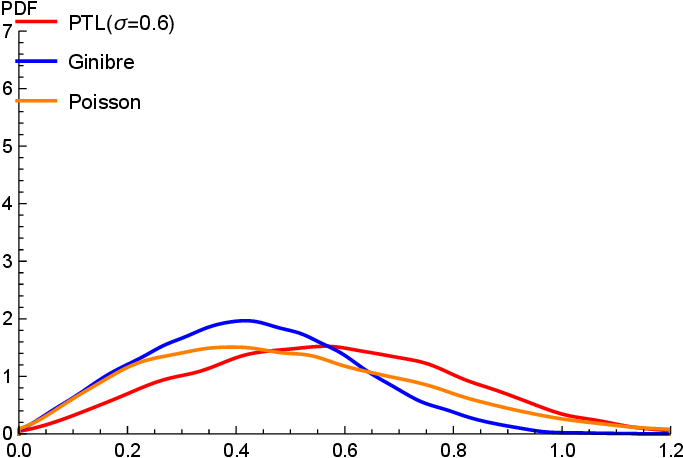}
\end{center}
\end{minipage}\\[4mm]
\begin{minipage}{0.52\hsize}
\begin{center}
\includegraphics[scale=\size3]{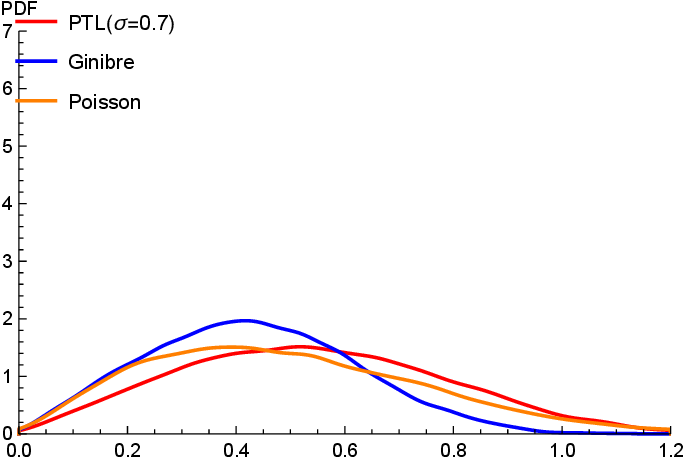}
\end{center}
\end{minipage}
\begin{minipage}{0.48\hsize}
\begin{center}
\includegraphics[scale=\size3]{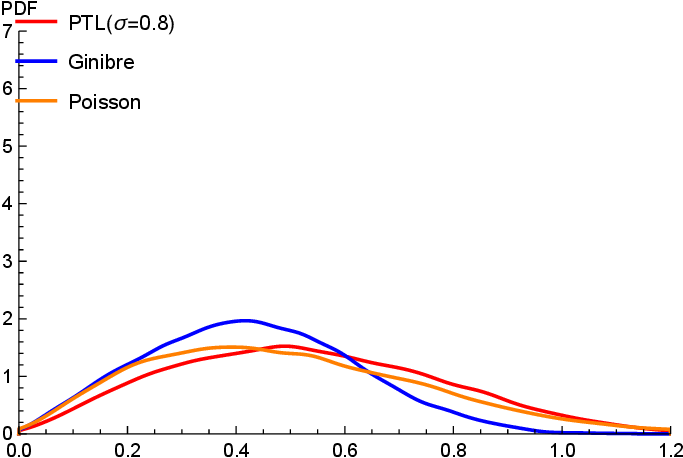}
\end{center}
\end{minipage}
\caption{Nearest neighbour distribution for Perturbed Triangular Lattice(PTL), Ginibre and
 Poisson ($\sigma=0.1, 0.2, \dots, 0.8$) for $\beta=2$. }
\label{fig:2dim-nearest-neighbour}
\end{figure}

\def\size4{0.55}
\begin{figure}[htbp]
\begin{minipage}{0.52\hsize}
\begin{center}
\includegraphics[scale=\size4]{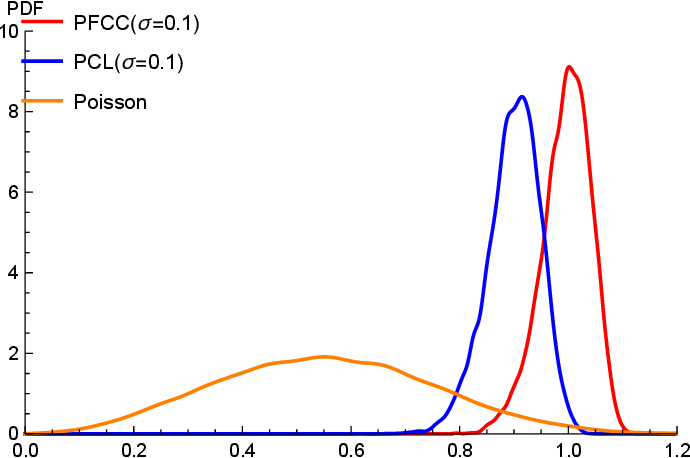}
\end{center}
\end{minipage}
\begin{minipage}{0.48\hsize}
\begin{center}
\includegraphics[scale=\size4]{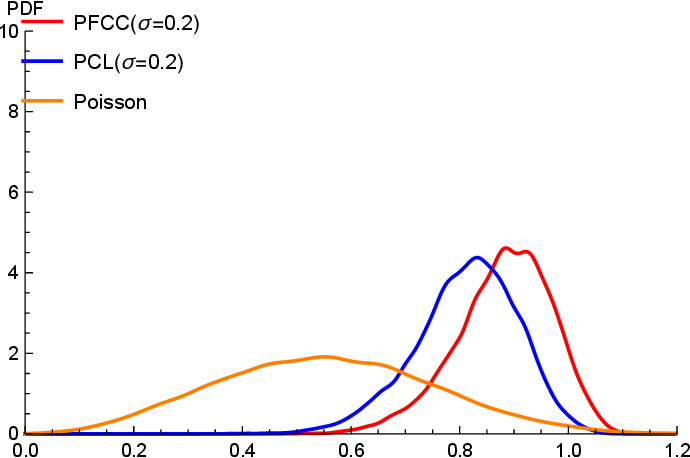}
\end{center}
\end{minipage}\\[4mm]
\begin{minipage}{0.52\hsize}
\begin{center}
\includegraphics[scale=\size4]{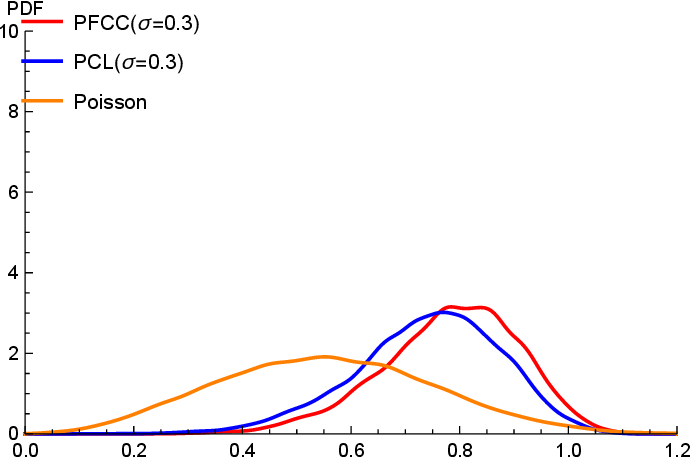}
\end{center}
\end{minipage}
\begin{minipage}{0.48\hsize}
\begin{center}
\includegraphics[scale=\size4]{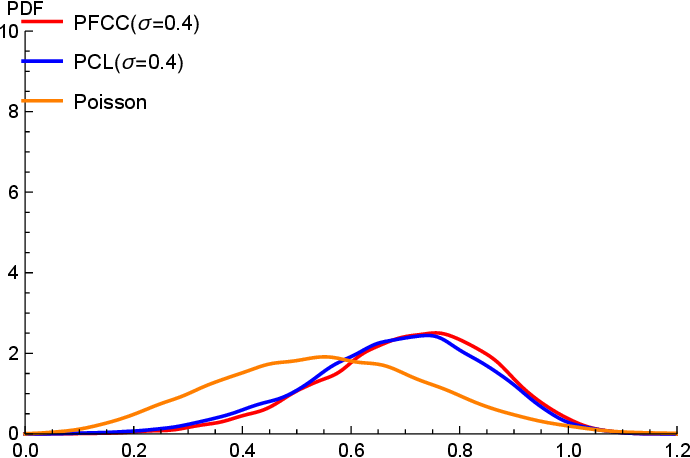}
\end{center}
\end{minipage}\\[4mm]
\begin{minipage}{0.52\hsize}
\begin{center}
\includegraphics[scale=\size4]{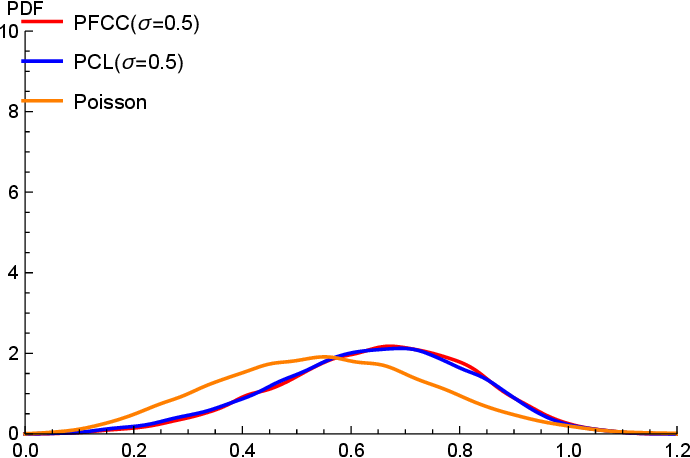}
\end{center}
\end{minipage}
\begin{minipage}{0.48\hsize}
\begin{center}
\includegraphics[scale=\size4]{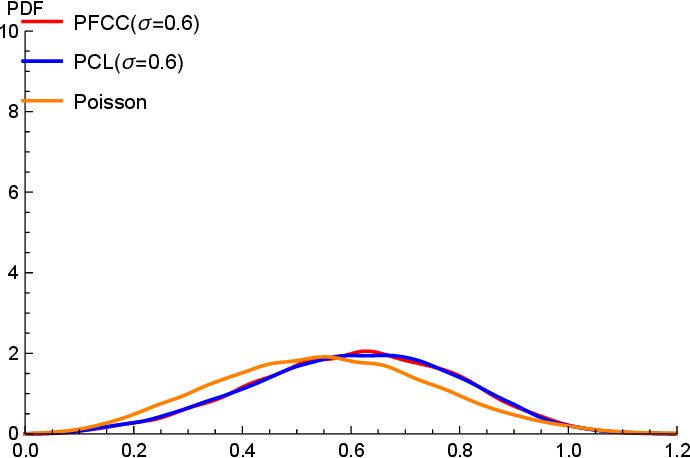}
\end{center}
\end{minipage}\\[4mm]
\begin{minipage}{0.52\hsize}
\begin{center}
\includegraphics[scale=\size4]{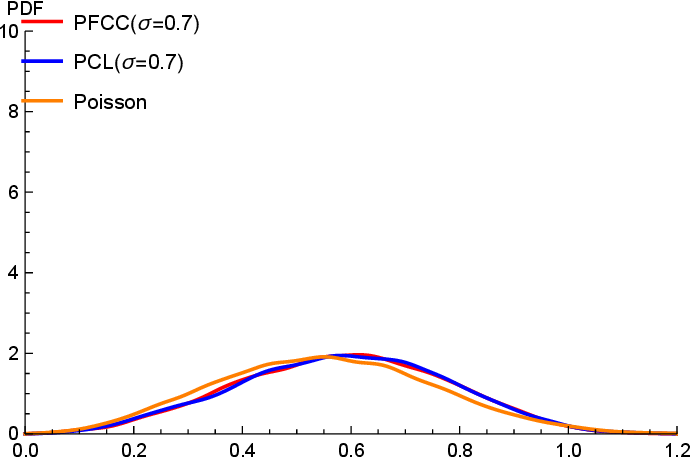}
\end{center}
\end{minipage}
\begin{minipage}{0.48\hsize}
\begin{center}
\includegraphics[scale=\size4]{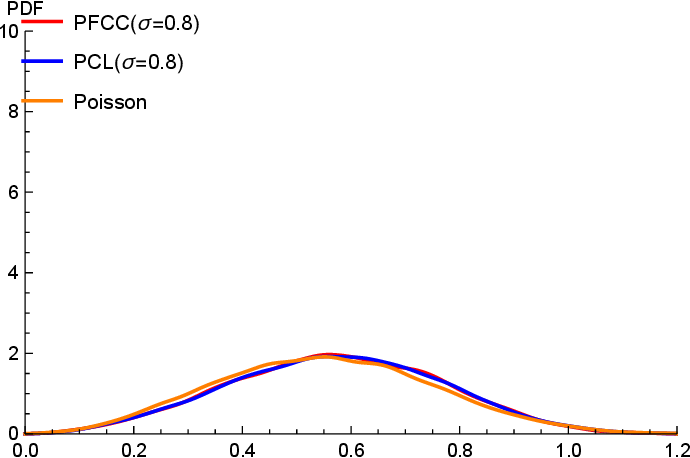}
\end{center}
\end{minipage}
\caption{Nearest neighbour distribution for perturbed Face-centered cubic lattice (PFCC), 
Perturbed Cubic Lattice(PCL) and Poisson ($\sigma=0.1, 0.2, \dots, 0.8$) for $\beta=2$. }
\label{fig:3dim-nearest-neighbour}
\end{figure}

\newpage

\bibliographystyle{plain}
\bibliography{PLN-3.bib}

\end{document}